\newtheorem{theorem}{Theorem}
\begin{document}

\title{\Large \bf BFT-PoLoc: A Byzantine Fortified Trigonometric Proof of Location Protocol using Internet Delays}
\author{
{\rm Peiyao Sheng}\\
Witness Chain
\and
{\rm Vishal Sevani}\\
Witness Chain
\and
 {\rm Ranvir Rana}\\
 Witness Chain
\and
 {\rm Himanshu Tyagi}\\
 Witness Chain
 \and
 {\rm Pramod Viswanath\thanks{Corresponding author: \href{mailto:pramod.viswanath@kaleidoscope-blockchain.com}{viswanath.pramod@gmail.com} }}\\
 Witness Chain
} 

\maketitle

\begin{abstract}
Internet platforms depend on accurately determining the geographical locations of online users to deliver targeted services (e.g., advertising). The advent of decentralized platforms (blockchains) emphasizes the importance of geographically distributed nodes, making the validation of locations more crucial. In these decentralized settings, mutually non-trusting participants  need to {\em prove} their locations to each other. The incentives for claiming desired location include decentralization properties (validators of a blockchain), explicit rewards for improving coverage (physical infrastructure blockchains) and regulatory compliance -- and entice participants towards prevaricating their true location malicious via VPNs, tampering with internet delays, or compromising other parties (challengers) to misrepresent their location. Traditional delay-based geolocation methods focus on reducing the noise in measurements and are very vulnerable to wilful divergences from prescribed protocol. 

In this paper we use Internet delay measurements to securely prove the location of IP addresses while being immune to a large fraction of Byzantine actions. Our core methods are to endow Internet telemetry tools (e.g., ping) with cryptographic primitives (signatures and hash functions) together with Byzantine resistant data inferences subject to Euclidean geometric constraints. We introduce two new networking protocols, robust against Byzantine actions: Proof of Internet Geometry (PoIG) converts delay measurements into precise distance estimates across the Internet; Proof of Location (PoLoc) enables accurate and efficient multilateration of a specific IP address. The key algorithmic innovations are in conducting ``Byzantine fortified trigonometry" (BFT) inferences of data, endowing low rank matrix completion methods with Byzantine resistance.  

We implemented both PoIG and PoLoc protocols and integrated them into a fully functional delay-based geolocation service, with global coverage (with specific focus on US) and integrated into a major blockchain (Ethereum). In a baseline evaluation of our protocols, we demonstrate significant improvements in the accuracy and robustness of location verification. In particular, location is identified within 100 km for a large fraction of the area. 
Under Byzantine distance inflation attack (with majority honest challengers), the accuracy of PoIG remains above 95\%. 
The precision and Byzantine resistance improves with a greater diversity of the challengers’ ISPs and an increased number of challengers across different directions.


\end{abstract}
\section{Introduction}
\label{sec:introduction}
\paragraph{Internet and geolocation} In today's digital era, the geographical positioning of online participants is crucial for the effective operation of Internet services. In the era of Web 2.0, location data is fundamental to the provision of targeted advertising and customized services, serving as a key component of various business models. With the advent of Web 3.0, grounded in blockchain technologies, the significance of geolocation has become even more pronounced. The core principle of decentralization in public blockchain networks\cite{ethereum,bitcoin,solana} emphasizes the need for {\em geographically} dispersed nodes to prevent censorship, highlighting the importance of accurately gauging the locations of these nodes. Moreover, geolocation plays a vital role in a wide range of distributed services, including platforms for distributed file storage \cite{filecoin, sia}, VPN services \cite{mysterium}, and distributed wireless networks \cite{helium}. These critical services require that hosts are physically located where they claim, ensuring data redundancy and resilience in file storage services, and the reliability of VPN services for privacy and accessing geo-restricted content. 

\begin{figure*}
\centering
\begin{subfigure}{.45\textwidth}
  \centering
  \includegraphics[width=\linewidth]{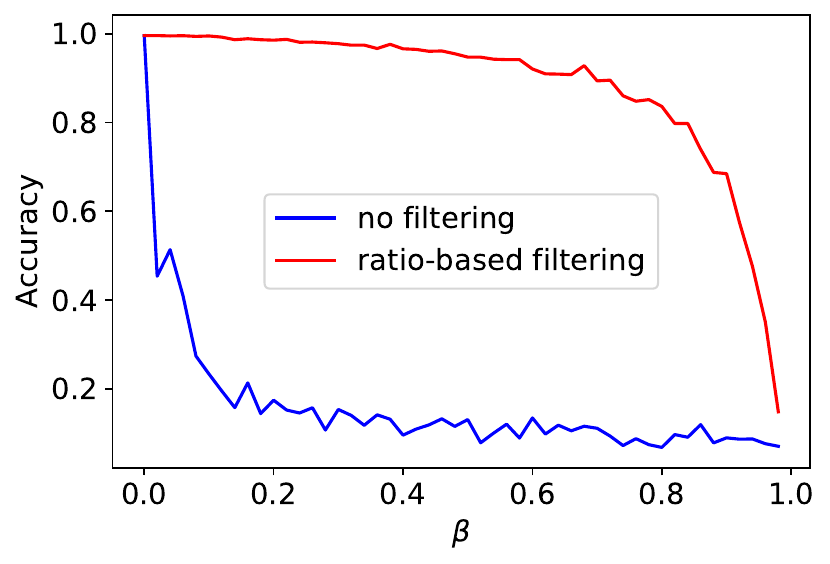}
  \caption{Accuracy-adversarial fraction ($\beta$) curve}
  \label{fig:acc}
\end{subfigure}\qquad  \qquad
\begin{subfigure}{.4\textwidth}
  \centering
  \includegraphics[width=\linewidth]{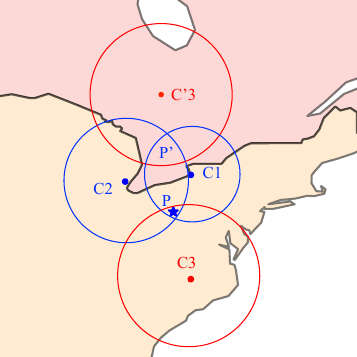}
  \caption{Feasible region of Waldo}
  \label{fig:circles}
\end{subfigure}
\caption{(a) The accuracy of delay-to-distance mapping is significantly improved under Byzantine distance inflation attack when employing robust ratio-based filtering. Here we assume delay is a linear function of distance with exponential noises, and the accuracy is the difference of MLE estimator to the coefficient. (b) An example of PoLoc protocol with three challengers located at $C_1, C_2, C_3$ shows that traditional multilateration method can be manipulated by Byzantine challenger who claims a false location $C'_3$ and outputs the incorrect region in Mexico for $P'$ instead of the region in the US which contains Waldo's true location $P$. Our PoLoc protocol outputs the largest uncertainty in all directions including $P'P$, providing robustness against Byzantine challengers. }
\label{fig:main-results}
\end{figure*}

\paragraph{ Delay based geolocation} The Internet's dynamic nature necessitates a robust framework for geolocation, one that can handle the inherent noise in measurement data. The key is to tie Internet delay measurements between pairs of addresses into geographic distances, relying on the premise that the propagation speed of EM/optical signals over wireless or wireline (including fiber optic)  cables is relatively constant (e.g., a constant fraction of the speed of light).
The challenge in this premise is that it conflicts  with the complex reality that Internet delays are not directly proportional to physical distances. This discrepancy arises from factors such as the varied speed of signal propagation across different media and the circuitous paths data packets traverse, shaped by geopolitical, commercial, and infrastructural considerations. 

\paragraph{Two stages of geolocation} To address this challenge, a standard approach is to divide the geolocation framework consists into two distinct stages. In the first stage, the {\em Internet geometry} is learnt: this refers to the delay to distance mapping (without regard to the angle) for a given IP address -- using the {\em known} (and trusted) locations of certain addresses (so-called ``landmarks''). This mapping is  navigating the complexities of Internet topology and latency influences via learning   network coordinate systems~\cite{dabek2004vivaldi,mao2004modeling},   delay-to-distance functions~\cite{gueye2004constraint, padamanabban2001determining,kohls2022verloc} and utilization of data-driven learning methods~\cite{lee2016ip,hong2023cheap}. 
In the second ``proof of location'' stage, a given IP address is geolocated via ``triangulation'', using 
  both the Internet geometry and delay measurements via trusted and location-aware landmarks~\cite{gueye2004constraint, padamanabban2001determining, laki2011spotter, wong2007octant, hong2023cheap, lee2016ip, maram2021goat}. The goal is typically to find the smallest feasible region or the most possible location the given address might be in face of measurement noises.

\paragraph{Byzantine fault tolerant geolocation} The above approaches largely deal with the uncertainty and noise in converting Internet delay measurements into distances and subsequently using these distances to determine locations. However, these methods are readily circumvented by the {\em adversarial} behavior, including the usage of VPNs and deliberate delay manipulations~\cite{muir2009internet, gill2010dude, abdou2017accurate, abdou2015cpv, weinberg2018how}. Participants in some blockchain based services have incentives to obfuscate their true geolocations (e.g., Helium ~\cite{helium}). The location-based incentives lead to adversarial behavior by the participants, including untrusted landmarks and  collusion among network participants, causing significant challenges to both the phases of delay based geolocation. 

\paragraph{Our contributions} We systematically study Byzantine fault tolerant delay based geolocation, including concrete modeling of threat vectors and a mathematical formulation that allows the characterization of security and accuracy of location identification. In this framework, a node desiring to substantiate its claimed location (termed ``Waldo'') subscribes to a group of trustless challengers, possessing self-reported, but unverified locations,  tasked with resolving the geolocation query: ``Where is Waldo?''. The key contribution is in Byzantine resistant inferences from data constrained by (Euclidean) geometry, that we call {\em Byzantine Fortified Trigonometry} (BFT).  We introduce new protocols for both stages of the geolocation process, detailed as follows: 

\begin{enumerate}
    \item[-] \noindent {\bf Proof of Internet geometry (PoIG)}. We propose a Proof of Internet Geometry (PoIG) protocol for converting delay measurements into distance estimates resistant to Byzantine manipulations. During this stage, challengers perform ping delay measurements among themselves to establish a monotone curve that maps any given delay to the largest possible distance.  To mitigate the inflated delay and distance claimed by malicious challengers, we design a robust filtering process that excludes data with the largest distance to delay ratios, achieving 95\% accuracy when the majority are honest (Figure~\ref{fig:acc}). Additionally, we utilize robust matrix completion techniques to lower the costs associated with ping measurements and to clean the data from honest participants.
    \item[-] \noindent {\bf Proof of location (PoLoc)}. In the second stage, we present a  Proof of Location protocol (PoLoc) within a fully Byzantine context, taking into account both Waldo's and the challengers' Byzantine behaviors. The protocol yields the maximum possible deviation of Waldo from its proclaimed location, referred to as ``uncertainty'', alongside with a cryptographic proof representing the entire feasible region (see Figure~\ref{fig:circles}). The protocol ensures that for an honest Waldo, the uncertainty accurately reflects the optimal deviation bound based on the measurements, and for a Byzantine Waldo, the actual location cannot be further from the claimed location than the uncertainty allows.  The proof comprises signed pings and measurements between Waldo and the challengers, facilitating public validation. We analyze that lower uncertainty can be achieved by improving the accuracy of PoIG, limiting the fraction of Byzantine actors, and increasing the number of challengers .
\end{enumerate}

\paragraph{Implementation and evaluations} We implemented both PoIG and PoLoc protocols and integrated them into a fully functional delay-based geolocation service, with global coverage (with specific focus on Asia, North America and Europe) and integrated into major blockchains (Ethereum and Solana). 
We conducted a comprehensive evaluation of our protocol to demonstrate its accuracy and robustness against a variety of practical adversarial behaviors. When examining accuracy, the PoIG protocol outperforms the ``bestline'' fit used in CBG~\cite{gueye2004constraint} by 10\% accuracy. And the PoLoc protocol achieves an uncertainty of less than 100 km for about 45\% of honest Waldos and an uncertainty of less than 1000 km for about 95\% of nodes, using 450 challengers. When examining the robustness of the implementation, our protocol effectively detects measurements manipulation during PoIG phase, and location spoofing and VPN usage during PoLoc phase. 

\paragraph{ Organization} The rest of the paper is organized as follows: In Section~\ref{sec:related_work}, we compare our protocol with prior work, highlighting the unique contributions of our approach. We formally define the system model and security assumptions in Section~\ref{sec:model}. In Section~\ref{sec:PoLoc}, we outline the main PoLoc protocol and explain how trigonometry is employed to ensure robust location estimation. Section~\ref{sec:PoIG} introduces two variants of the PoIG protocols, designed to operate under different trust resources. We present the results of our evaluation in Section~\ref{sec:evaluation}. Finally, we conclude the paper in Section~\ref{sec:conclusion}, summarizing our findings and suggesting directions for future research.

\section{Related Work}
\label{sec:related_work}

~\subsection{Delay Based Geolocation}
In delay-based geolocation, ping delay is measured between hosts with known locations, known as landmarks, and the target. These methods vary in their approach to mapping delay to distance. Constraint-Based Geolocation (CBG)~\cite{gueye2004constraint} assigns delay to distance individually for each landmark. It establishes a straight line, termed the ``bestline'' fit, for the pairs of delay-distance points from a given landmark to all other landmarks, maximizing the possible distance any other landmark can be from a given landmark for a specific delay. Octant~\cite{wong2007octant} employs a convex hull around pairs of delay-distance points to estimate the minimum and maximum distance the target could be from a given landmark.

Unlike deriving a delay to distance mapping separately for each landmark, Spotter~\cite{laki2011spotter} develops a single delay-distance mapping for all landmarks through a probabilistic delay-distance model. Geoping~\cite{padamanabban2001determining} implements two distinct delay-based methods for geolocation. Initially, it identifies the location of the target as the location of the landmark with the shortest delay. Subsequently, it employs a probability distribution-based approach for estimating distance based on delay. Studies by~\cite{arif2010internet, arif2010geoweight} also utilize a probability distribution-based mapping from delay to distance, whereas~\cite{dong2012network} adopts a segmented polynomial regression model. An investigation by~\cite{ziviani2005improving} explores how the accuracy of delay-based approaches is influenced by the placement of landmarks.

Delay-based geolocation accuracy can be further enhanced through integration with other information, such as routing information~\cite{jayant2004toward, katz2006towards}, internet topology~\cite{katz2006towards} and region labels ~\cite{hong2023cheap}.

Our protocol leverages the delay-based geolocation approach. However, as detailed in Sec.~\ref{sec:PoLoc}, we ensure that the estimated distance from the challenger to Waldo is always greater than the actual distance. Each challenger, therefore, utilizes a monotone delay to distance mapping that yields the maximum possible distance for a given delay. Our method resembles CBG~\cite{gueye2004constraint}, but through the use of a monotone mapping, we demonstrate an improvement in performance compared to CBG, as discussed in Sec.~\ref{subsubsec:cbg}.

\subsection{Other Approaches}

Besides measurement-based geolocation, other techniques have been employed for geolocation.~\cite{guo2009mining} applies web mining to scrape web pages containing location information of web servers, thus tracking the cities hosting web servers.~\cite{dan2018distributed} proposes a machine learning-based solution for mapping DNS names to locations, while~\cite{dan2016improving} enhances location estimates by mining query logs. GeoTrack~\cite{padamanabban2001determining} maps router labels to locations.~\cite{lee2016ip} utilizes crowdsourcing through an Internet speed measurement tool for geolocation.~\cite{dan2021ip} employs IP interpolation, assuming IPs within the same range are collocated. Additional methods include the use of host names inferred from traceroute, DNS mapping, and others for location identification~\cite{padmanabhan2001investigation, chandrasekaran2015alidade, scheitle2017hloc, wang2011towards}. Some studies apply machine learning to enhance location estimates~\cite{youn2009statistical, eriksson2010learning, eriksson2012posit, jiang2016ip}. There are also various services, both free and paid, that maintain IP to location databases~\cite{poese2011ip, hostip, ipinfodb, ip2location, maxmind}.

While these alternative approaches can complement delay-based geolocation to improve accuracy, they do not offer Byzantine fault tolerance. Therefore, their adaptability in a Byzantine setting remains uncertain.

\subsection{Byzantine Geolocation}

Our focus primarily lies on Byzantine fault tolerance for geolocation. In this context,~\cite{abdou2015cpv, weinberg2018how, kohls2022verloc,maram2021goat} are most closely related to our work.~\cite{abdou2015cpv} provides location guarantees for a Byzantine Waldo using the triangle inequality to offer location guarantees. However, they produce a triangular region where Waldo could likely be located, in contrast to our protocol which determines the intersection of circles centered at the challengers. They assume Waldo cannot inflate delays, a restriction we do not impose.~\cite{weinberg2018how} addresses the geolocation of VPN servers potentially misrepresenting their locations. While their scenario allows for inflated ping delays involving a VPN server, they do not offer location guarantees for a Byzantine Waldo. ~\cite{kohls2022verloc} proposes a decentralized geolocation verification protocol that outputs the most possible location and a confidence score, their information propagation model is built by a group of trusted servers. Recent work~\cite{maram2021goat} explores the geolocation problem in decentralized file storage service with trusted anchors providing timestamps. Hence, our work differs from these studies as our threat model varies, and we provide performance guarantees for our protocol.

\section{Security Model}
\label{sec:model}

\subsection{Problem Setting}
The problem setting that we consider involves a node, Waldo, who claims a particular location, $\tilde{P}$, which includes latitude and longitude coordinates. Waldo's true location is denoted as $P$. We have $\tilde{P} = P$ if Waldo is honest; otherwise, $\tilde{P} \neq P$.

A group of $N$ challengers are geographically distributed following a uniform distribution. Each challenger reports their location, $C_i$ (where $i$ ranges from 1 to $N$). Both Waldo and the challengers are connected to the Internet, allowing for direct communication. The challengers can measure the ping delay to Waldo or other challengers through application layer pings. These pings are cryptographically signed by both parties using their private keys, as part of a public key infrastructure (PKI), serving as proof of communication. The public keys are broadcast before the protocol starts.

The core issue we address is quantifying the extent of deviation that $\tilde{P}$ can have from Waldo's real location, $P$. This deviation, denoted as uncertainty $R = \|P-\tilde{P}\|$, is expressed in kilometers (km) and signifies the furthest possible distance that $\tilde{P}$ might diverge from $P$. We consider the maximum $R$ in any direction, termed as $R^{*}$. The formal definition of $R^*$ is illustrated in Section~\ref{subsec:proof-of-location}.

\subsection{Threat model}
\label{subsec:threat-model}
We adopt a Byzantine failure model where the adversary can corrupt a certain fraction of participants, causing them to deviate from the protocol arbitrarily. However, the security of PKI prevents them from forging signatures. A corrupted Byzantine Waldo may claim a false location ($\tilde{P} \neq P$) and delay or refuse to send a response upon receiving an application layer ping from challengers. The adversary can corrupt less than $ \beta$ fraction of challengers uniformly randomly in locations. Byzantine challengers may also claim false locations and manipulate the measurement results. Note that the ping delay measured by an honest challenger can only be inflated if Waldo or the target challenger are Byzantine, as they cannot send back a response before receiving the actual packets due to the unforgeability of signatures. However, the reported delay can be both inflated or deflated when the source challengers are Byzantine.



\section{Proof of Location Protocol }
\label{sec:PoLoc}
In this section, we detail the methodology our protocol employs to quantify the uncertainty in Waldo's geolocation and analyze various factors influencing this uncertainty. 

\subsection{Protocol Overview}
\label{subsec:proof-of-location}
When Waldo wishes to prove its location, $\tilde{P}$, a PoLoc challenge is initiated. All challengers within a predetermined distance, $X$ kms from $\tilde{P}$, are selected to validate it. The determination of $X$ is discussed in Section~\ref{sec:PoIG}.

Suppose $n$ challengers are selected to verify Waldo. Each challenger initiates communication with Waldo by sending a ping packet and waiting for its response. They measure the round-trip time (RTT), $t_{i}$ for the $i$-th selected challengers, from when the ping packet was sent to when the response was received. A large number of ping packets are exchanged, and the minimum RTT from all measurements is selected. Using a large number of packets helps account for any intermittent network congestion. As previously stated, Waldo can delay the ping response, meaning the delay, $t_{i}$, can be inflated. If challenger $i$ is also corrupted, they can collude to report an arbitrary $t_{i}$.

Each challenger $i$ has a delay-to-distance mapping, denoted as $F_i$, which converts measured RTT into physical distance. $F_i$ is derived from a proof of internet geometry (PoIG) phase, discussed in Section~\ref{sec:PoIG}. We explore two variants of PoIG protocols: one requiring trusted challengers and the other applying robust filtering techniques to minimize the trust assumption. To increase the efficiency and reduce noises, we employ robust matrix completion to fill the RTT measurements. With their delay-to-distance mapping, challenger $i$ can estimate how far Waldo is from itself, i.e., $d_i = F_i(t_i)$. As demonstrated later in Section~\ref{sec:trigonometry}, the challenger can then estimate the distance between $\tilde{P}$ and $P$ in every direction to determine the boundary of the output region.

\subsection{The Trigonometry in Proof of Location}
\label{sec:trigonometry}

\begin{figure}[h]
    \centering
    \begin{minipage}[b]{1.0\linewidth}
    \centering
    \includegraphics[width=1.0\textwidth]{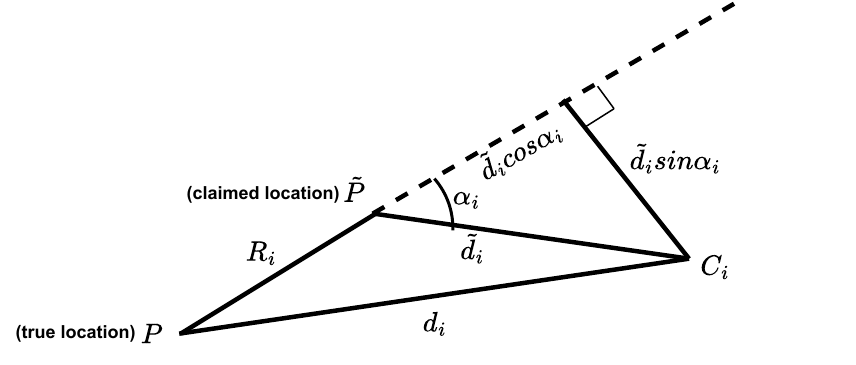}
    \begin{tabular}{|c|c|}
        \hline
        $P$ & true location of Waldo  \\
        $\Tilde{P}$ & claimed location of Waldo  \\
        $C_{i}$ & challenger $i$  \\
        $R_{i}$ & uncertainty in geolocation of Waldo by $C_{i}$  \\
        $d_{i}$ & distance between $C_{i}$ and $P$  \\
        $\Tilde{d_{i}}$ & distance between $C_{i}$ and $\Tilde{P}$  \\
        $\hat{d_{i}}$ & distance estimated between $C_{i}$ and $P$ from latency  \\
        \hline
    \end{tabular}
    \end{minipage}
    \caption{Bound on uncertainty.}
    \label{fig:bound}
\end{figure}

\begin{algorithm}[h]
\protect\caption{Proof of Location Protocol}
\label{alg:PoLoc}

\textbf{Input}:  $n$, $\beta$, $[F_1, F_2, \cdots, F_n]$, $\tau$, $[C_1, C_2, \cdots, C_n]$, $\tilde{P}$.

\textbf{Initialize}: $\tilde{d}_i = \|C_i - \tilde{P}\|$

\textbf{Ping phase for challenger $i$}: For $i=1,\ldots,n$, challenger $i$ sends a signed ping packet to Waldo, who also signs the packet as response. On receiving the response, challenger $i$ measures the time elapses and repeat $q$ times to get a minimum $t_{i}$. Then it calculates $\hat{d}_i = F_i(t_i)$.

\textbf{Proving phase}: On receiving all distance estimations or at least $(1-\beta)n$ measurements and a timer $\tau$ expires (note that an unobserved $\hat{d}_i = 0$), we calculate the following value for each challenger $i\in [1,n]$ at each direction $\theta\in [0, 2\pi)$: 
\begin{equation}
R_{i\theta} = \sqrt{\hat{d_{i}}^2 - \tilde{d_{i}}^2\sin^2(\gamma_{i}-\theta}) - \tilde{d_{i}}\cos(\gamma_{i}-\theta)
\end{equation}
where $\gamma_{i}$ is the angle between $C_{i}$ and $\tilde{P}$ with respect to geographic north. 
Then we exclude the $\beta n$ smallest values in each direction $\theta$, let $S_k$ represent the $k$-th smallest element in set $S$, we compute:
\begin{equation}\label{eq:smallest}
    R_{\theta}^{*} = \{R_{i\theta} | i\in [1,n]\}_{\beta n}
\end{equation}

\textbf{Output}: 
\begin{equation}\label{eq:output}
R^{*} = \max_{\theta \in [0, 2\pi]} R_{\theta}^{*}
\end{equation} 
\end{algorithm}

Figure~\ref{fig:bound} illustrates the situation where an honest challenger $i$ is measuring $P$, the true location of Waldo, while $\tilde{P}$ is the claimed location. The distance between $P$ and $\tilde{P}$, represented as $R_{i}$, is the uncertainty in a certain direction with angle $\alpha_i$ relative to the direction $C_i\tilde{P}$.

The PoLoc protocol is detailed in Algorithm~\ref{alg:PoLoc}, containing a ping phase conducted by each challenger individually, and a proving phase when measurements are collected to evaluate the location. In the ping phase, the challenger $i$ measures the minimum RTT $t_{i}$ between its own location $C_i$ and the actual location of Waldo $P$. It then estimates the distance by calculating $\hat{d}_i = F_i(t_i)$. Here, we assume the challenger knows the delay-to-distance mapping $F_i$. The process of calibrating the mapping through PoIG is discussed in Section~\ref{sec:PoIG}.

We denote $\hat{d_{i}}$ as the distance between $C_i$ and the claimed location of Waldo. As mentioned, a Byzantine prover can only inflate the RTT when the challenger is honest (by potentially delaying the response), which means
\begin{equation}
\hat{d_{i}} \ge d_{i}. \label{eq:inflation}
\end{equation}

From Figure~\ref{fig:bound}, we have:

\begin{equation}
d_{i}^2 = (R_{i} + \tilde{d_{i}}\cos\alpha_{i})^2 + \tilde{d_{i}}^2\sin^2\alpha_{i} \label{eq:tri}
\end{equation}

where $\alpha_{i}$ is the angle between the lines joining challenger $C_{i}$ and $\tilde{P}$, and $P$ and $\tilde{P}$. Substituting Equation~\eqref{eq:tri} in Equation~\eqref{eq:inflation} and simplifying, we get:

\begin{equation}
\label{eqn:bound}
R_{i} \le \sqrt{\hat{d_{i}}^2 - \tilde{d_{i}}^2\sin^2\alpha_{i}} - \tilde{d_{i}}\cos\alpha_{i}.
\end{equation}

Thus, Equation~\ref{eqn:bound} provides the bound on the uncertainty of Waldo in terms of the estimated distance between the challenger and Waldo, $\hat{d_{i}}$.

This uncertainty calculation is for a single challenger for angle $\alpha_{i}$. To find the uncertainty in every direction with $\alpha_{i}$ varying from 0 to $2\pi$, we consider that $R_{i}$ will vary for different challengers at different angles $\alpha_{i}$. We aim to derive the maximum uncertainty across all directions, denoted as $R^{*}$.

To obtain $R^{*}$, we first calculate $R_{i\theta}$ for every challenger, which is the maximum $R_{i}$ for challenger $i$ in direction $\theta$ (with respect to geographic north). We denote $\gamma_{i}$ as the angle between $C_{i}$ and $\tilde{P}$ with respect to geographic north. Then, $R^{*}$ can be easily calculated by substituting $\alpha_i$ with $\gamma_i - \theta$ in Equation~\ref{eqn:bound}.

The goal is to first find the minimum uncertainty in each direction reported by honest challengers. It is important to note that if a challenger is also corrupted, the assumption in Equation~\eqref{eq:inflation} no longer applies. In other words, a Byzantine challenger $c_j$ can report an arbitrarily small $\hat{d_j} < d_j$. To prevent malicious measurements from influencing the final uncertainty, in each direction, we exclude the $\beta n$ smallest uncertainty values calculated from different challengers. For simplicity, we define $S_k$ as the $k$-th smallest element in set $S$, and $k$ is the position of the element when the set is ordered. Then, we get Equation~\eqref{eq:smallest}.

We vary $\theta$ from 0 to $2\pi$, which gives us the boundary of the region containing all possible locations where Waldo might be. We then take the maximum across all $R_{\theta}^{*}$ as the final uncertainty, $R^{}$. The computation of the final $R^{*}$ is fomulated in Equation~\eqref{eq:output}.

\paragraph{Relation to circle intersection}
The trigonometry calculated above can be visualized to better derive the potential true location of Waldo. Essentially, Equation~\ref{eq:inflation} represents a circle with radius $\hat{d}_i$ centered at the challenger $i$'s location. To understand what the outputs of our protocol capture, let's start by considering examples with $\beta = 0$, meaning all challengers are honest. In this case, Equation~\ref{eq:smallest} outputs the minimum uncertainty in all directions as the boundary of the region. It is intuitive to see that the potential region where Waldo might be located is equivalent to the intersection of the circles centered at each challenger with radius $\hat{d}_i$. Figure~\ref{fig:intersection} shows an example of PoLoc outputs and circle intersections when $n=3$ and $\beta=0$.

\begin{figure}[h]
    \centering
    \includegraphics[width=0.45\textwidth]{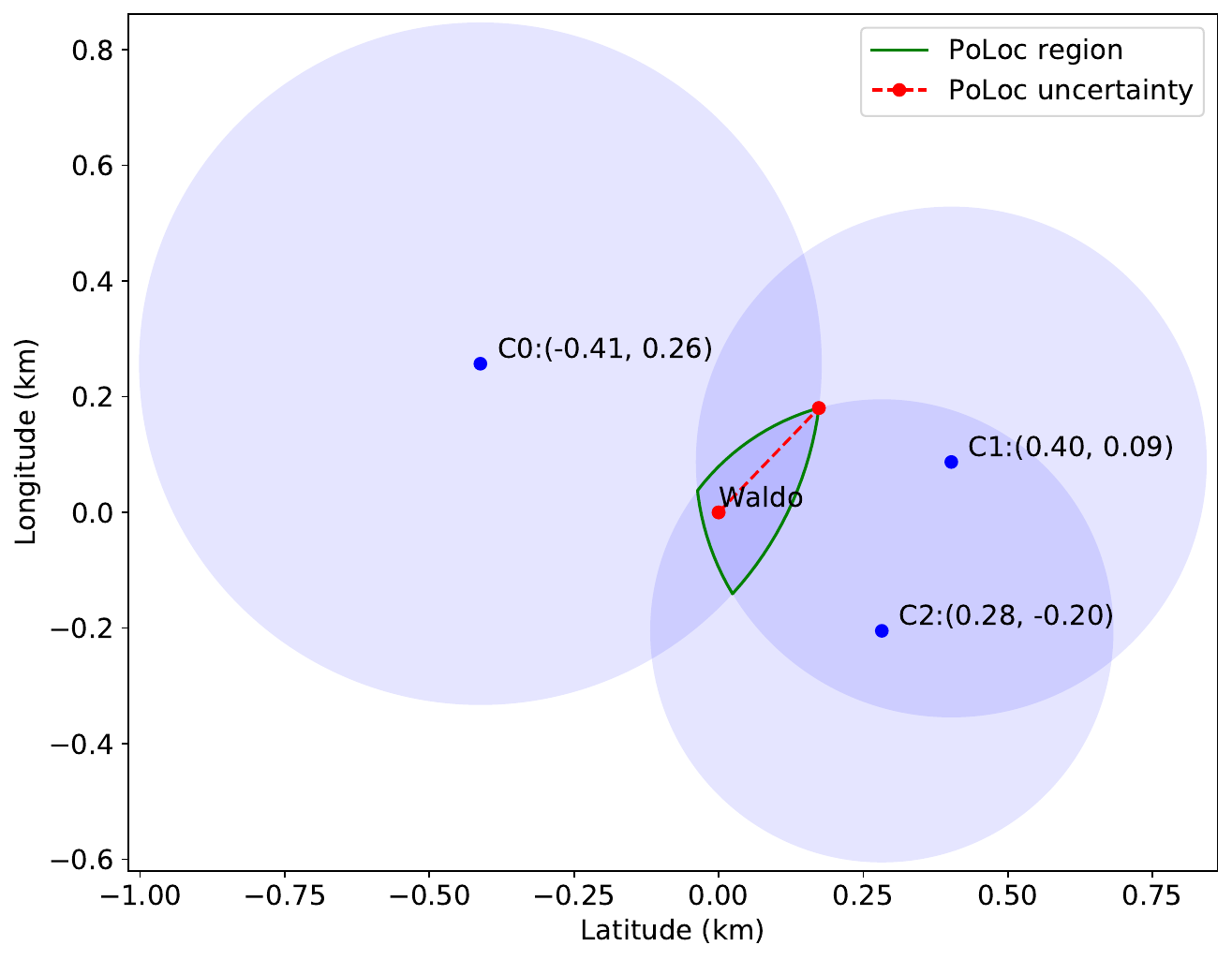}
    \caption{PoLoc outputs and the intersection of the circles centred at challengers ($n=3, \beta=0$).}
    \label{fig:intersection}
\end{figure}

Now, considering the case when $\beta > 0$, since the first $\beta n$ smallest uncertainty values are ignored, the output region becomes larger than in the case where $\beta = 0$. This exclusion can be considered as that we can output all regions that are output by at least $(1-\beta)n$ challengers, in other words, the area covered by at least $(1-\beta)n$ circles. Figure~\eqref{fig:intersection-robust} shows an example of our PoLoc outputs and circle intersections when $n=10$ and $\beta=0.3$. The malicious Waldo tries to claim a different location with the help of three corrupted challengers (whose region are marked in yellow).

\begin{figure}[h]
    \centering
    \includegraphics[width=0.45\textwidth]{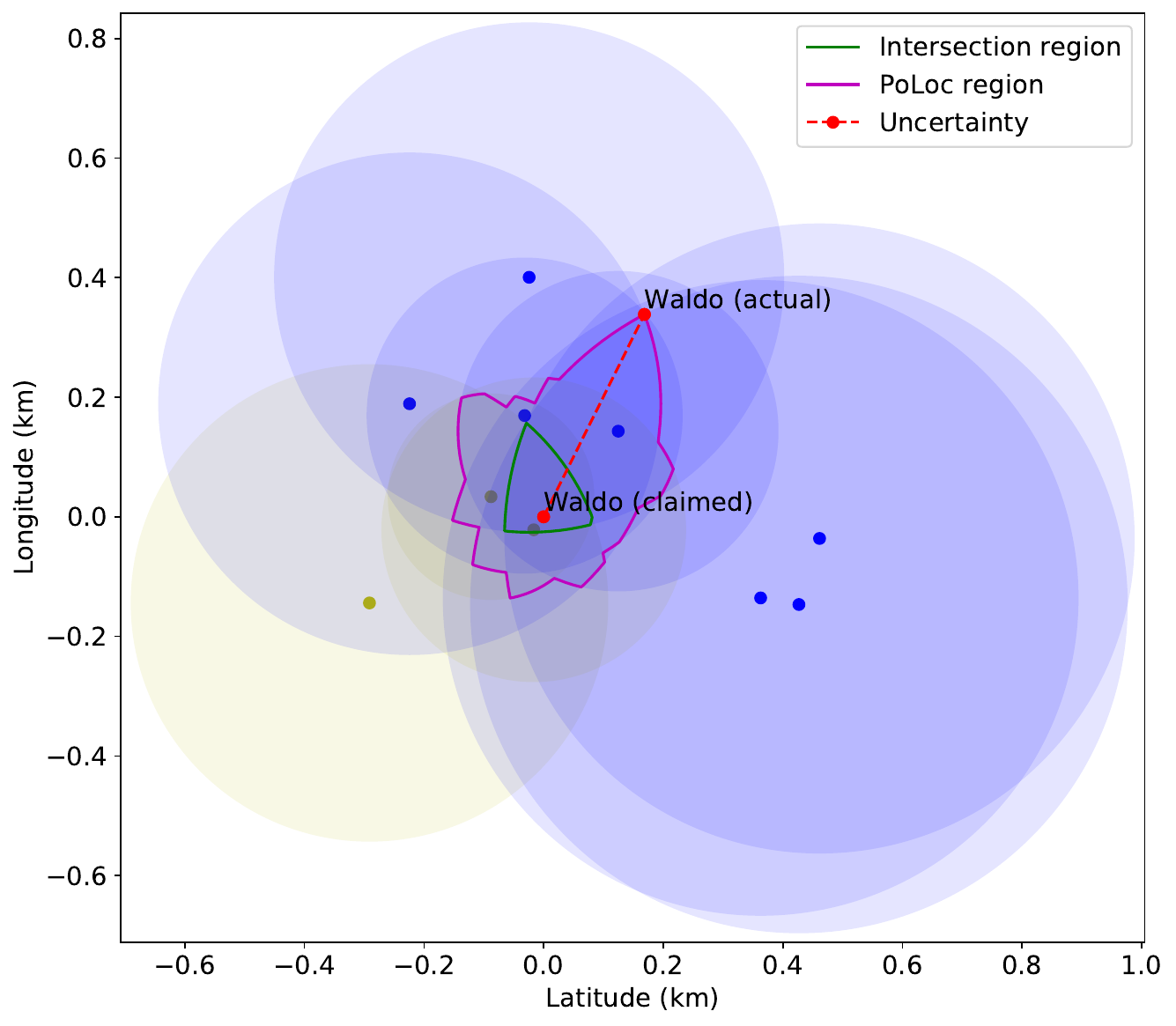}
    \caption{PoLoc outputs and the intersection of the circles centred at challengers ($n=10, \beta=0.3$).}
    \label{fig:intersection-robust}
\end{figure}

\subsection{Protocol Guarantees}

 Byzantine Waldo may have incentives to claim a very different location (for anonymity, for earning more rewards), so the eventual uncertainty $R^*$ output by PoLoc protocol captures the maximum deviation that Waldo's true location can have from the claimed location. We first prove the soundness of the protocol.

\begin{theorem}
    (Soundness) The actual location of Waldo will not be farther away from its claimed location than the uncertainty $R^*$ calculated in Equation~\ref{eq:output}, even if Waldo is Byzantine.
\end{theorem}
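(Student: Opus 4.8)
The plan is to show that for the (unknown) direction $\theta^\star$ in which Waldo's true location $P$ actually lies relative to $\tilde P$, the quantity $R_{\theta^\star}^\star$ computed by the protocol is at least $\|P-\tilde P\|$; since $R^\star = \max_\theta R_\theta^\star \ge R_{\theta^\star}^\star$, this gives $R^\star \ge \|P-\tilde P\|$, which is the soundness claim. So the argument reduces to a statement about a single fixed direction, and the outer maximization over $\theta$ is just there to make the bound hold without knowing $\theta^\star$.

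First I would fix $\theta^\star$ to be the bearing of $P$ from $\tilde P$ (w.r.t.\ geographic north), so that for every challenger $i$ the geometric angle $\alpha_i$ of Figure~\ref{fig:bound} equals $\gamma_i - \theta^\star$, and the true uncertainty in that direction is $R_i = \|P - \tilde P\|$, the \emph{same} value for all $i$. Next, I would split the challengers into the honest set $H$ and the Byzantine set $B$, with $|B| < \beta n$. For each honest $i \in H$, the no-premature-response property gives the delay-inflation inequality $\hat d_i \ge d_i$ from Equation~\eqref{eq:inflation}; combining this with the law-of-cosines identity Equation~\eqref{eq:tri} and solving the resulting quadratic for $R_i$ yields exactly Equation~\eqref{eqn:bound}, i.e.\ $R_{i\theta^\star} \ge R_i = \|P-\tilde P\|$ for every honest $i$. (One should note the quadratic has a unique nonnegative root and that $\hat d_i \ge \tilde d_i |\sin(\gamma_i-\theta^\star)|$ holds because $\hat d_i \ge d_i$ and $d_i$ is a hypotenuse with that leg, so the square root is real — this is the one place a little care is needed.) Hence in direction $\theta^\star$ at least $|H| \ge (1-\beta)n$ of the values $R_{i\theta^\star}$ are $\ge \|P-\tilde P\|$.

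Then I would invoke the order-statistic step: $R_{\theta^\star}^\star$ is defined in Equation~\eqref{eq:smallest} as the $(\beta n)$-th smallest element of $\{R_{i\theta^\star}\}_{i=1}^n$, equivalently the smallest value remaining after discarding the $\beta n$ smallest. Since fewer than $\beta n$ of the values (those with $i \in B$) can be below $\|P-\tilde P\|$, after discarding $\beta n$ values at least one discarded value must have been $\ge \|P-\tilde P\|$ is \emph{not} what I want; rather, the cleaner statement is that among any $\beta n + 1$ of the values at least one comes from an honest challenger, so the $(\beta n)$-th smallest — which has $\beta n$ values at or below it and hence sits at position $\beta n + 1$ from the bottom when counting the element itself — is $\ge$ the minimum honest value $\ge \|P-\tilde P\|$. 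Concretely: at most $|B| < \beta n$ indices are Byzantine, so the element at rank $\beta n$ (one past the discarded block, using the paper's indexing where $S_k$ is the $k$-th smallest) is honest or dominated by an honest one, giving $R_{\theta^\star}^\star \ge \|P-\tilde P\|$. Finally $R^\star = \max_\theta R_\theta^\star \ge R_{\theta^\star}^\star \ge \|P-\tilde P\|$, completing the proof.

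The main obstacle, and the part worth stating carefully, is the counting/order-statistic argument in the last paragraph: one must pin down the exact convention for "the $\beta n$ smallest" versus the "$(\beta n)$-th smallest element" (the excerpt's Equation~\eqref{eq:smallest} uses $\{\cdots\}_{\beta n}$, i.e.\ rank $\beta n$, while the prose says "exclude the $\beta n$ smallest"), handle the case $\beta n \notin \mathbb{Z}$ via a floor or ceiling, and make sure the strict inequality $|B| < \beta n$ is what rescues the bound at the boundary rank. Everything else — the trigonometric identity, solving the quadratic, and the reduction to the single direction $\theta^\star$ — is routine once $\theta^\star$ is chosen as the true bearing.
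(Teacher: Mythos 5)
Your proposal is correct and is essentially the paper's argument in contrapositive form: both rest on the observation that every honest challenger satisfies $\hat d_i \ge d_i$ and hence reports $R_{i\theta^\star} \ge \|P-\tilde P\|$ in the true bearing $\theta^\star$, so with fewer than $\beta n$ Byzantine challengers the rank-$\beta n$ order statistic cannot fall below $\|P-\tilde P\|$, and $R^\star \ge R_{\theta^\star}^\star$ finishes the claim. Your write-up is actually more careful than the paper's (you flag the realness of the square root and the off-by-one ambiguity between ``exclude the $\beta n$ smallest'' and ``the $\beta n$-th smallest'' in Equation~\eqref{eq:smallest}), but the underlying counting argument is the same.
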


\begin{proof}
    We denote $\theta'$ as the direction in which Waldo is located with respect to the claimed location such that $|P - \tilde{P}| > R^* > R_{\theta'}^{*}$. According to Equation~\ref{eq:smallest}, $R_{\theta'}^{*}$ is the $\beta n$ smallest uncertainty measured by challengers, so Equation~\ref{eq:inflation} doesn't hold for $\ge \beta n$ challengers. However, the adversary can corrupt less than $\beta n$ challengers,  so at least one honest challenger deflates the measurement, which contradicts the assumption. As a result, there does not exist a direction in which the uncertainty is larger than $R^*$.
\end{proof}

It is obvious that a larger uncertainty indicates a lower accuracy of the protocol. We aim to prove that our PoLoc protocol provides the optimal uncertainty given the measurement data.

\begin{theorem}
    (Completeness) The uncertainty $R^*$ calculated in Equation~\ref{eq:output} is optimal given $R_{i\theta}, F_i$ for $i\in[1,n], \theta\in[0,2\pi]$.
\end{theorem}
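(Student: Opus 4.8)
The plan is to argue that no valid uncertainty bound smaller than $R^*$ can be certified from the available data $\{R_{i\theta}, F_i\}$, by exhibiting, for any claimed smaller bound, a configuration of Waldo's true location and adversarial corruptions that is fully consistent with all measurements yet violates that smaller bound. Concretely, I would fix the direction $\theta^*$ that attains the maximum in Equation~\eqref{eq:output}, so that $R^* = R_{\theta^*}^{*} = \{R_{i\theta^*}\}_{\beta n}$, the $\beta n$-th smallest per-challenger uncertainty in that direction. The key observation is that in direction $\theta^*$ there are exactly (at least) $(1-\beta)n - (\beta n - 1)$... more usefully, at least $n - (\beta n - 1)$ challengers $i$ with $R_{i\theta^*} \ge R^*$, and the bound $R^*$ is \emph{tight} for the particular challenger realizing the $\beta n$-th order statistic: its inequality Equation~\eqref{eqn:bound} holds with equality when $\hat d_i = d_i$, i.e.\ when Waldo sits exactly on that challenger's circle in direction $\theta^*$.

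The main steps, in order: (1) Place a hypothetical Waldo at the point $P'$ at distance exactly $R^*$ from $\tilde P$ along direction $\theta^*$; by the trigonometric identity Equation~\eqref{eq:tri} this is consistent with $\hat d_i \ge d_i$ for every challenger whose $R_{i\theta^*} \ge R^*$, since for those the realized distance $d_i$ to $P'$ is $\le \hat d_i$. (2) For the remaining challengers --- those with $R_{i\theta^*} < R^*$, of which there are at most $\beta n - 1 < \beta n$ --- designate them as Byzantine; since a Byzantine challenger may report an arbitrarily small $\hat d_i$, their measurements impose no real constraint and are trivially consistent with $P'$. This uses only the threat model's bound of strictly fewer than $\beta n$ corruptions, so the construction is admissible. (3) Conclude that $P'$, at distance $R^*$ from $\tilde P$, cannot be ruled out given the data; hence any sound certificate must report uncertainty at least $R^*$, and since Soundness already gives that $R^*$ is itself sound, it is exactly optimal. (4) For the ``honest Waldo'' half of the claim, note symmetrically that if Waldo is honest then $\hat d_i \ge d_i = \tilde d_i$ with the per-challenger bound $R_{i\theta}$ being the best deviation consistent with challenger $i$'s measurement, and the aggregation via $\beta n$-th order statistic and max over $\theta$ is the coarsest defensible combination, so $R^*$ matches the true optimal deviation bound.

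The hard part will be pinning down precisely the right notion of ``optimal given the data,'' because the theorem statement is informal about the adversary's power and about what a competing estimator is allowed to assume. I expect the real content to be a matching lower-bound (indistinguishability) argument as in step (2): showing that for \emph{any} purported uncertainty $R < R^*$ there is an adversarial strategy --- corrupting the at-most-$\beta n - 1$ challengers with the smallest $R_{i\theta^*}$ and shifting Waldo to distance $R^*$ in direction $\theta^*$ --- under which every honest challenger's constraint Equation~\eqref{eq:inflation} still holds, so no protocol reading only $\{R_{i\theta}, F_i\}$ can distinguish this world from one where Waldo is within $R$. Care is needed to verify that moving Waldo to $P'$ does not \emph{decrease} any honest challenger's $d_i$ below feasibility in other directions, but since we only need consistency of the per-challenger inequalities (not of a joint delay model), the identity Equation~\eqref{eq:tri} handles each challenger independently and the argument closes.
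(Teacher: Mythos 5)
Your proposal is correct and takes essentially the same route as the paper's proof: both identify the maximizing direction $\theta^*$, attribute the excluded smallest values to Byzantine challengers reporting arbitrarily small distances, and exhibit a data-consistent world in which the challenger realizing the order statistic measures exactly ($\hat{d}_{i^*}=d_{i^*}$) so that Waldo genuinely sits at distance $R^*$, contradicting any smaller bound. Your version is in fact somewhat more careful than the paper's, explicitly checking consistency of Equation~\eqref{eq:inflation} for the non-excluded challengers and the strict $<\beta n$ corruption budget.
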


\begin{proof}
    We prove that uncertainty $R^*$ is optimal by contradiction. Assume that there exists some uncertainty $R' < R^*$ that better captures the largest deviation of Waldo, i.e., in all direction $\theta\in [0,2\pi]$, the distance between the actual location and the claimed location of Waldo $\|P - \tilde{P}\|\le R' < R^*$. However, according to Equation~\ref{eq:output}, let's denote $\theta^* = \arg\max_\theta \{R_{i\theta} | i\in [1,n]\}_{\beta n}$. At direction $\theta^*$, we calculate the uncertainty of Waldo in terms of all challengers $i$ as $R_{i\theta^*}$, there exists a situation where all Byzantine challengers output 0, and some honest challenger $i^*$ who outputs the smallest $R_{i^*\theta^*} = R^*$ accurately measures $P$ ($\hat{d}_{i^*} = d_{i^*}$ in Equation~\ref{eq:inflation}). Following Equation~\ref{eq:tri} and \ref{eqn:bound}, we have 
    \begin{equation}
        |P - \tilde{P}\| = R^* > R'
    \end{equation}
    which contradicts to the assumption that $\|P - \tilde{P}\|\le R'$ holds in any cases.
\end{proof}

Though the closed-form of $R^*$ doesn't exist, we have seen that $R^*$ resides in one of the arcs which make up the boundary of the region intersected by exactly $(1-\beta)n$ circles. To better understand how uncertainty changes with respect to other system parameters such as the adversarial fraction $\beta$, the number of challengers $n$ and the accuracy of delay-to-distance mapping $F_i$, for simplicity we simulate the case where the mapping outputs the ground truth distance with an one-sided Gaussian noise,
\begin{equation}
\label{eq:simulation}
    F_i(t_i) = d_i + |\mathcal{N}(0,\sigma^2)|
\end{equation}
where $\sigma^2$ is the variance of the Gaussian noise.

\begin{figure}[h]
    \centering
    \includegraphics[width=0.45\textwidth]{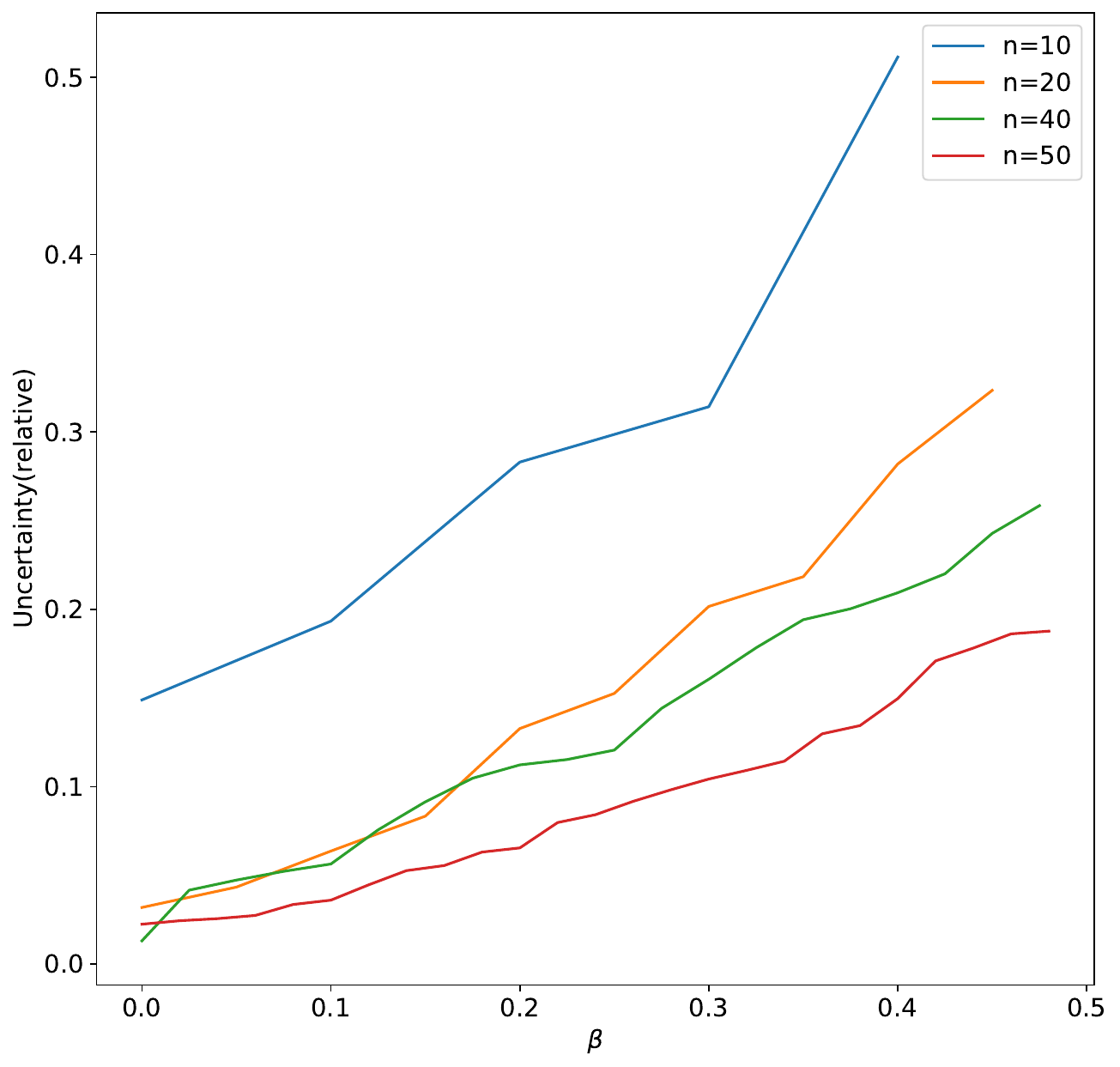}
    \caption{The effect of number of challengers ($n$) to uncertainty and Byzantine fraction ($\beta$) curve. 
}
    \label{fig:k}
\end{figure}

We set Waldo's actual location at the origin and randomly sample $n$ challengers within a unit circle around this point. Figure~\ref{fig:k} displays the simulation outcomes for $n=10, 20, 40, 50$ challengers, with the fraction of Byzantine challengers $\beta$ varying from $0.1$ to $0.5$. For each parameter set, we conduct $50$ experiments to determine the average uncertainty $R^*$. The results indicate that, generally, the uncertainty decreases with an increasing number of challengers. A higher $\beta$ value increases the uncertainty across all scenarios. Figure~\ref{fig:var} examines the impact of different variances in delay-to-distance mappings on the uncertainty, showing that less noisy mappings lead to smaller uncertainty values, thereby enhancing the PoLoc protocol's performance.

\begin{figure}[h]
    \centering
    \includegraphics[width=0.45\textwidth]{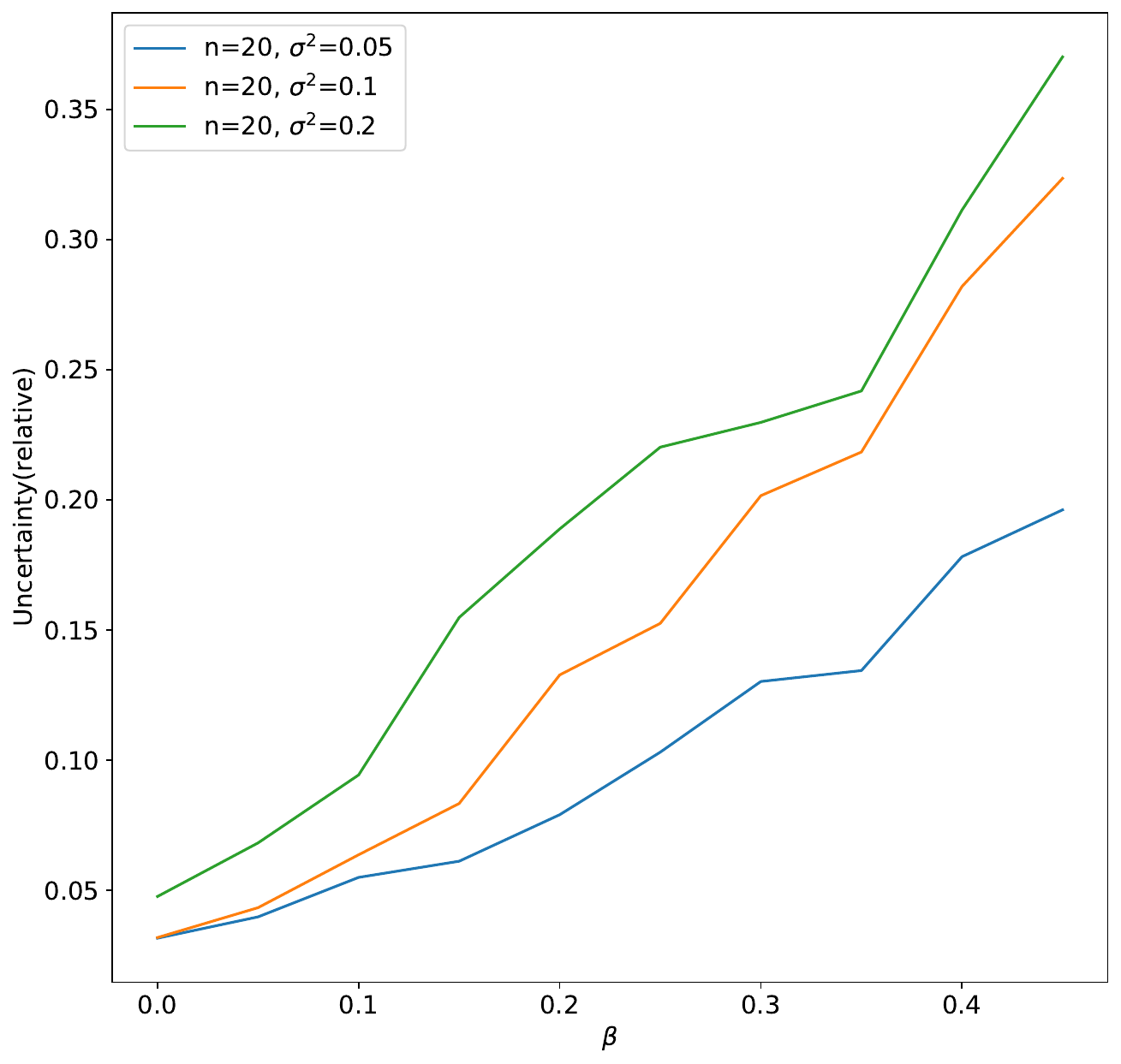}
    \caption{The effect of variances of delay-to-distance mapping to uncertainty and Byzantine fraction ($\beta$) curve. 
}
    \label{fig:var}
\end{figure}

\section{The Geometry of Internet Protocol}
\label{sec:PoIG}

\global\long\def\P{\mathcal{P}}
\global\long\def\I{I}

As outlined in the previous section, an essential requirement of our protocol is that the estimated distance of the honest challenger to Waldo should be greater than the actual distance (equation~\ref{eq:inflation}). Since the estimated distance is calculated from delay-to-distance mapping, it is crucial for a challenger to obtain a mapping that converts delay to the largest likely possible distance that the challenger can be from any Waldo. The mapping essentially captures the underlying geometry of the internet.

Therefore, our PoLoc protocol includes a one-time calibration phase to learn the geometry of the internet, referred to as Proof of the Internet Geometry (PoIG) protocol. It is conducted by each challenger before participating in PoLoc protocol for proving any Waldo's location. We consider two variants of the PoIG protocol based on the availability of a pool of trusted servers.

\subsection{Partially Trusted PoIG Protocol} 
In the first variant, we assume the existence of a pool of trusted network servers that respond to ping packets from challengers. Challengers measure the ping delay to these servers, whose locations are known. The challengers then calculate the distance to these servers based on their reported locations and map ping delays to distances. While Byzantine challengers can still inflate delays, each challenger collects their own data to learn the mapping, thereby not being affected by Byzantine attacks from other challengers.

To ensure that for a given ping delay, the challenger outputs the maximum possible distance that any Waldo can be from it,  we employ a a monotone mapping approach which is distinct from those used in prior research, such as the linear or convex hull mappings~\cite{gueye2004constraint, wong2007octant}. 

\begin{figure}[h]
    \centering
    \includegraphics[width=0.45\textwidth]{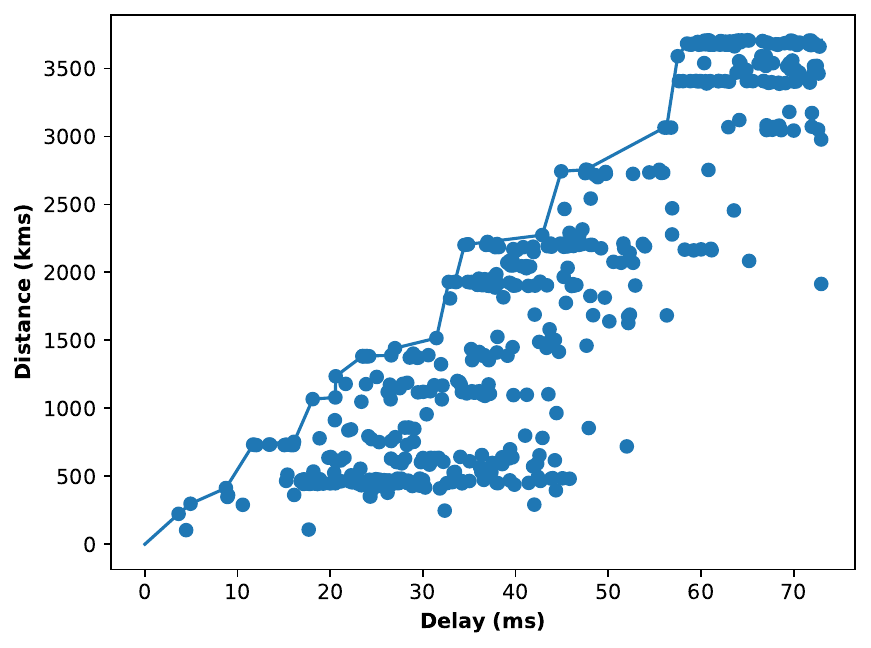}
    \caption{Monotone mapping from delay to distance.}
    \label{fig:mapping}
\end{figure}
A challenger plots all measured delay-distance pairs on the X-Y coordinate plane, with round-trip delay on the X-axis and distance on the Y-axis, as shown in Figure~\ref{fig:mapping}. To construct the monotone mapping, for a given delay, $t_1$, we select the maximum distance observed so far, $d_{1\max}$. For any subsequent time instance, $t_2$, where $t_2 > t_1$, we connect $d_{1\max}$ to $d_{2\max}$, only if $d_{2\max} \ge d_{1\max}$. If $d_{2\max} < d_{1\max}$, then we seek a subsequent time instance $t_3$, or later, which meets this monotone requirement. 
Figure~\ref{fig:mapping} illustrates this monotone mapping with a solid line, connecting points on the outer periphery to result in a monotone map from delay to distance such that the distance is a non-decreasing function of increasing delay. This approach provides a tighter fit than linear mappings used by CBG~\cite{gueye2004constraint} as we illustrate in Sec.~\ref{subsubsec:cbg}.

\paragraph{Server selection} We observe that the accuracy of delay measurements increases with the distance between nodes, so we establish an upper limit of $X$ kms when selecting servers for constructing the mapping. Our data indicate that an X value of 2000 kms roduces favorable results. Figure~\ref{fig:mapping} illustrates the significant variation in distance for a given delay; for instance, a 30 ms delay correlates to distances ranging from approximately 400 kilometers to 1500 kilometers. Such disparity is attributable to various factors, including the layout of fiber optic links and the peering relationships between ISPs~\cite{kasiviswanathan2011geography, eriksson2013understanding}.  

\begin{figure}[h]
    \centering
    \includegraphics[width=0.45\textwidth]{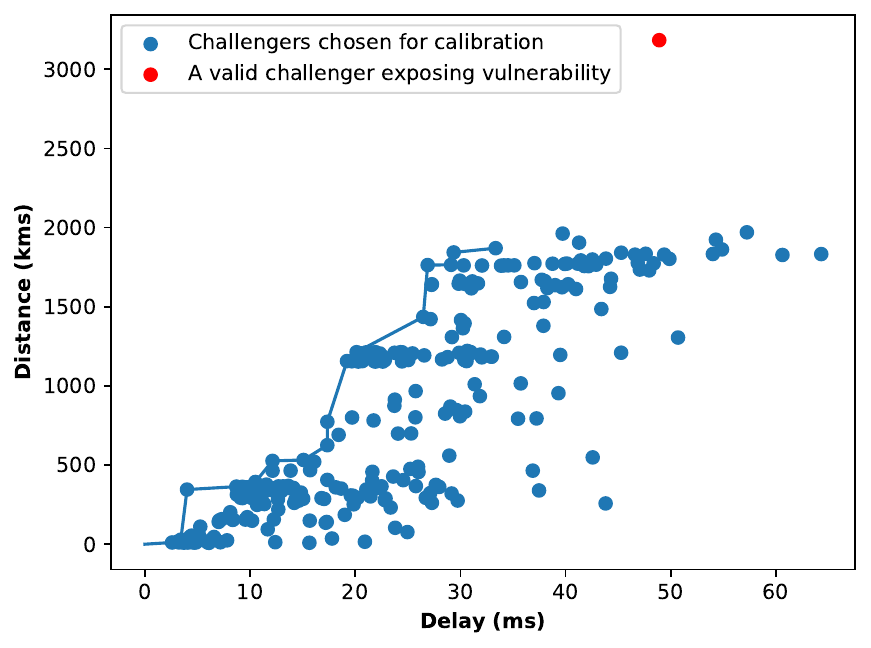}
    \caption{A possible attack.}
    \label{fig:attack}
\end{figure}

This substantial variability introduces a potential attack: challengers located within $2000$ kms of Waldo's claimed location might, in reality, be further than $2000$ kms from Waldo's actual location. An example is highlighted in Figure~\ref{fig:attack}. If we only consider all servers within $2000$ kms of the challenger, the greatest delay within this radius is approximately 65 ms. Servers with delays less than $65$ ms and distances larger than $2000$ kms (like the one depicted in red) are excluded. We account for this attack by adding these additional measurement points. The updated protocol works as follows: Firstly, we compute the maximum delay measured by servers with at most $X$ kms distance, denoted as $t_{max}$, i.e., about 65 ms in Figure~\ref{fig:attack}. Then we pick all other servers that experience a ping delay less than $t_{max}$ regardless of their distances to the challenger. Consequently, the previously excluded red server, depicted in Figure\ref{fig:attack}, is now selected as shown in Figure\ref{fig:mapping}. Indeed, servers located as far as 3500 kms away have been incorporated into the process in Figure\ref{fig:mapping}.

\begin{figure*}
\centering
\begin{subfigure}{.45\textwidth}
  \centering
  \includegraphics[width=\linewidth]{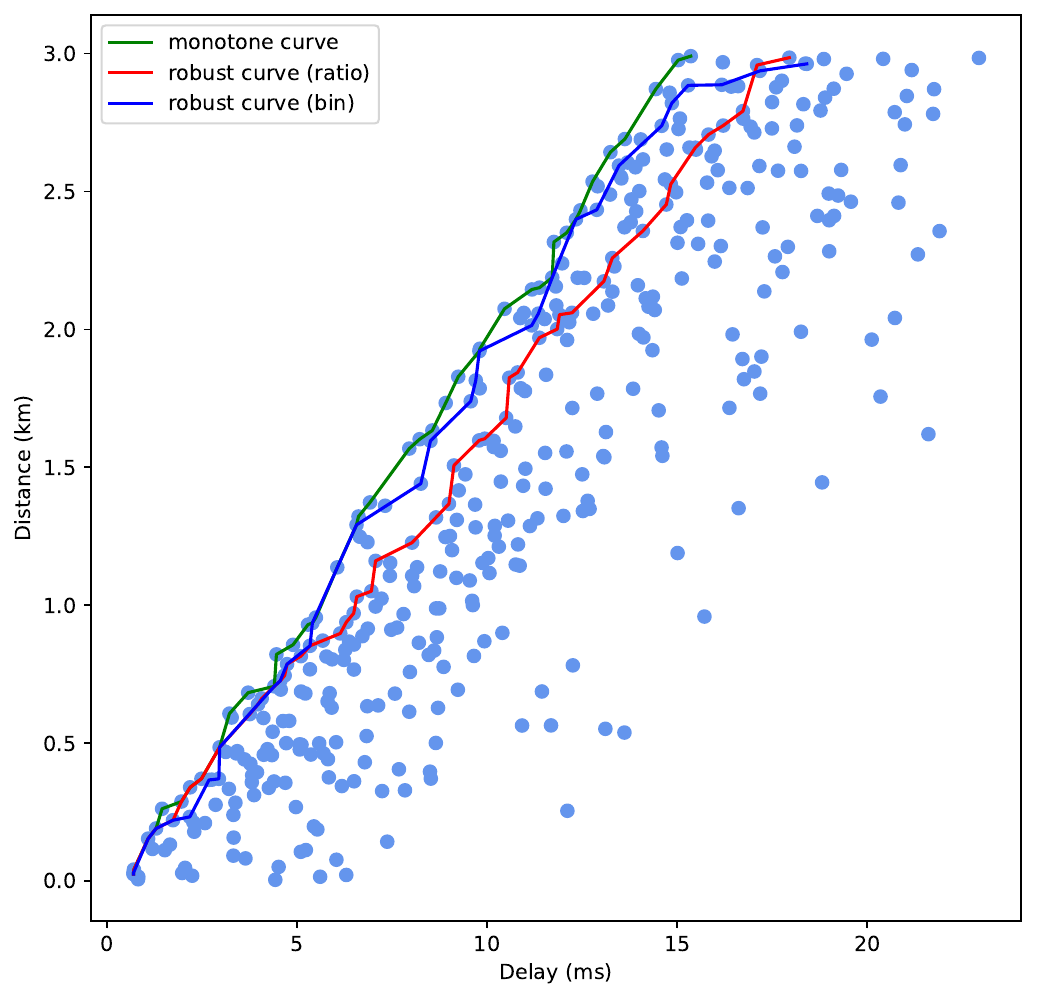}
  \caption{$\beta=0$ }
  \label{fig:calibration-honest}
\end{subfigure}%
\begin{subfigure}{.45\textwidth}
  \centering
  \includegraphics[width=\linewidth]{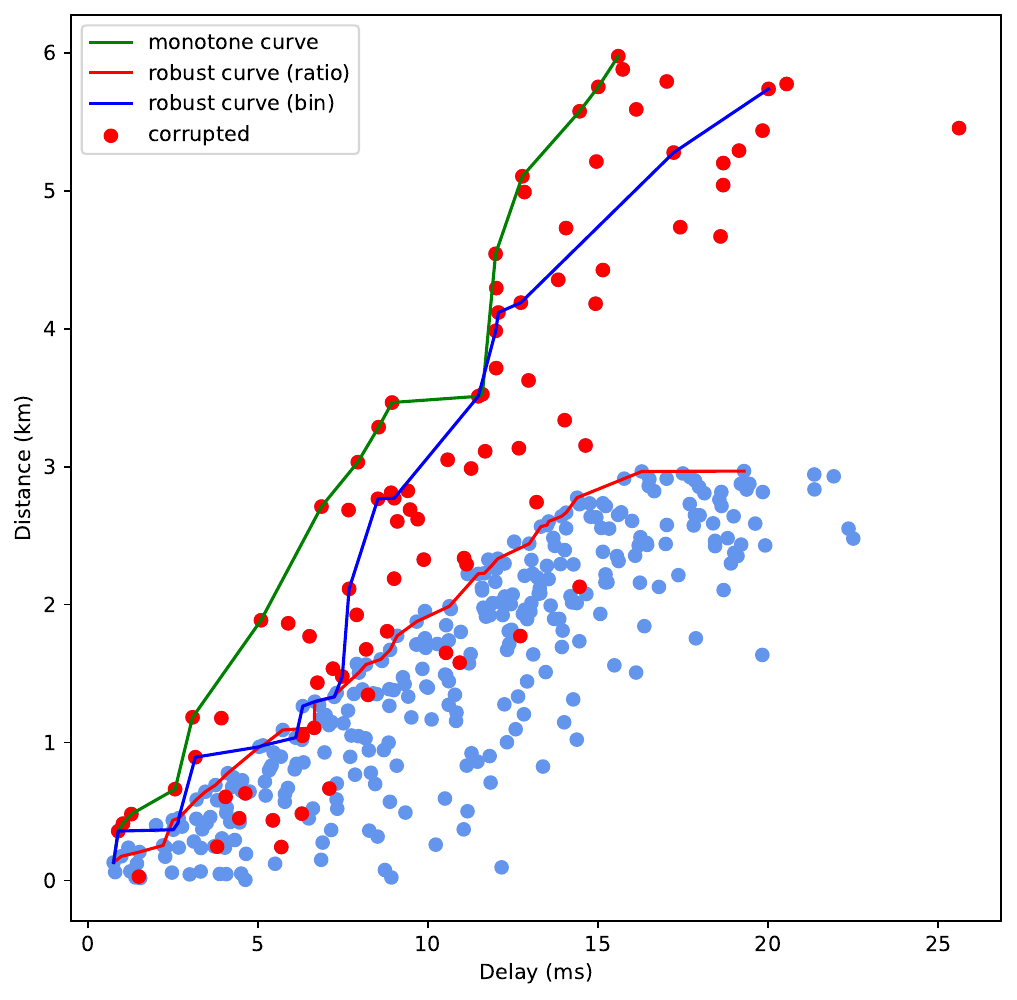}
  \caption{$\beta=0.2$ }
  \label{fig:calibration-byz}
\end{subfigure}
\caption{Comparisons of simulated linear delay-to-distance curves derived by different variants of calibration protocols. }
\label{fig:robust-calibration}
\end{figure*}
\subsection{Trustless PoIG Protocol} 
To adapt to a decentralized environment and further reduce dependency on trust, we explore a trustless version of our protocol. This version operates without relying on external trusted servers, instead utilizing the challengers themselves to construct the mapping. In this setup, each challenger obtains RTT measurements by pinging other challengers. This adjustment introduces a significant challenge: since challengers may falsify their locations and manipulate RTT measurements. For example, Byzantine challengers can report incorrect locations that are farther than their actual positions. This falsely suggests a larger distance to the pinging challenger, thereby skewing the data point to appear above the actual monotone curve.

A preliminary solution to fix the problem is to apply exclusion rules similar to those used in the proving phase of the protocol. Specifically, for each latency (or a narrow latency range), we disregard the the $\beta$ fraction of measurements with the longest distances due to the uniform location distribution of challengers. However, adversaries could strategically distribute Byzantine measurements across a certain distance range. Considering the monotony of the curve, an adversarial data point with large distance and small latency may nullify all the following honest data points with larger latency. Therefore, we propose calculating the distance-to-delay ratio for each measurement and eliminating the $\beta$ fraction of data with the highest ratios.

In our simulation, we generate measurement data based on a linear mapping with added one-sided normal random noise (Equation~\ref{eq:simulation}). Figure~\ref{fig:robust-calibration}  depicts the adjusted mappings after implementing robust exclusion rules.  With $k=400$ challengers participating, we compare scenarios between entirely honest challengers (Figure~\ref{fig:calibration-honest}) and a situation where 20\% ($\beta=0.2$) of the challengers are Byzantine (Figure~\ref{fig:calibration-byz}).
We assume Byzantine challengers double their distance while maintaining the same delay. 

In these illustrations, the basic monotone curve constructed by partially trusted PoIG protocol is shown in green. 
We denote the robust curve, which excludes the $\beta n$ challengers with the highest distance-to-delay ratio in red. And the blue curve represents the measurements after removing a $\beta$ fraction of challengers for each data set within a small delay range (i.e., a bin containing $1/\beta$ nodes). 

The findings suggest that filtering based on the distance-to-delay ratio is exceptionally effective in countering distance inflation attacks, causing only a minimal and acceptable deviation from the baseline curve. This outcome is expected as the data with the most significant ratios tend to lift the curve upward, thus potentially increasing the final uncertainty in the proving phase. However, as seen in Figure~\ref{fig:calibration-honest}, when challengers accurately report their positions, this filtering approach may inadvertently exclude honest data points, resulting in a curve that is less conservative compared to the baseline. This means that for a given latency, the mapped distance might not represent the maximum possible distance.

To address this discrepancy, we propose incorporating a correction factor $\eta$, aimed at preserving the proving phase's soundness with only a minor increase in uncertainty. Figure~\ref{fig:calibration-factor} demonstrates the effect of applying this correction factor to the curve.

\begin{figure}[h]
    \centering
    \includegraphics[width=0.45\textwidth]{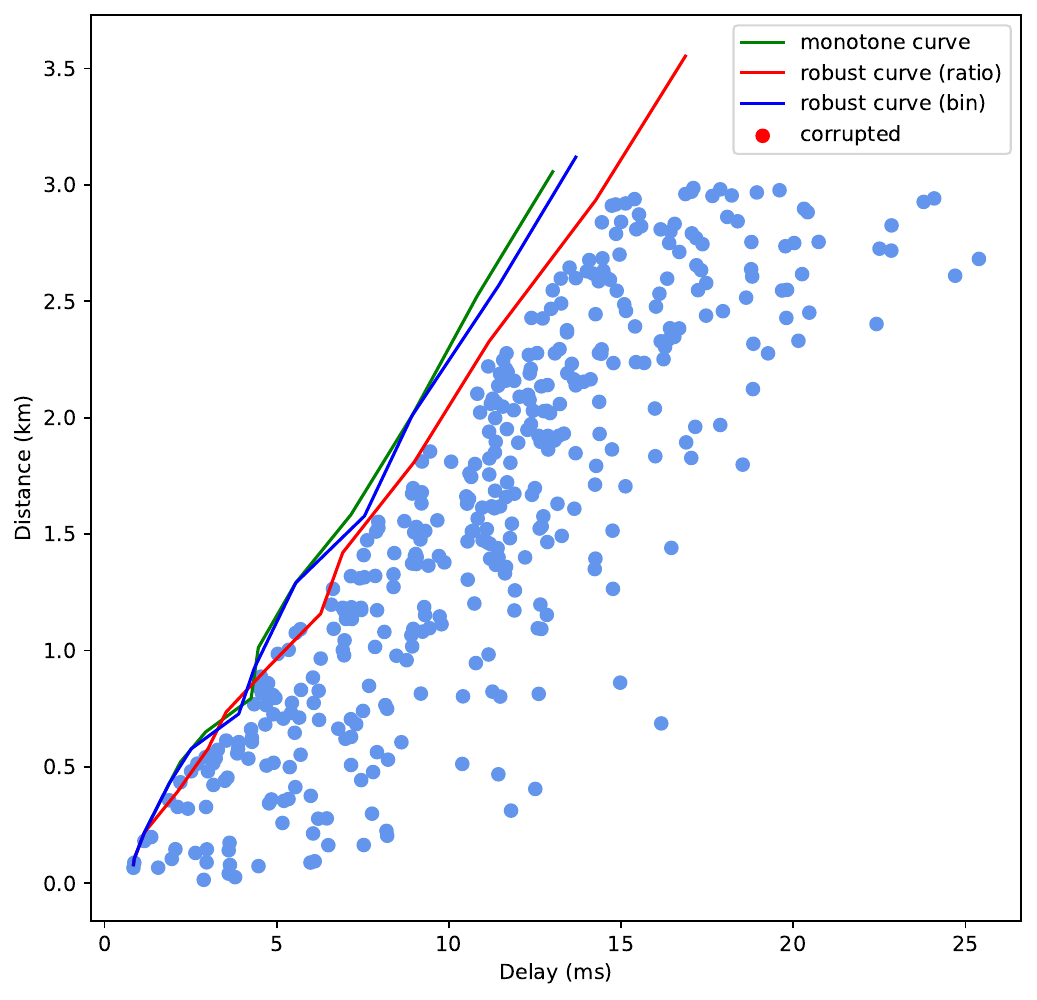}
    \caption{Robust monotone mapping with ratio-based filtering and correction factor $1.2$.}
    \label{fig:calibration-factor}
\end{figure}


\paragraph{Accuracy from regression models}
To understand how these robust filtering methods influence the accuracy of mapping, we start with assuming that the actual distance to delay function is a linear function with normal observation noises, that is 
\begin{equation}
    t_i = cd_i + \epsilon + b
\end{equation}
where $b,c$ are the constant parameters, $d_i$ is the distance to challenger $i$, $\epsilon$ is a Gaussian noise with variance $\sigma^2$. This is a simple linear regression problem and we are looking for an estimate $\hat{c}$ that minimizes the sum of squared residuals:
\begin{equation}
    \hat{c} = \frac{\sum_{i=1}^{k}(d_i - \bar{d})(t_i - \bar{t})}{\sum_{i=1}^{k}(d_i - \bar{d})^2}
\end{equation}
where $\bar{d}$ and $\bar{t}$ are the sample means of $d_i$ and $t_i$, respectively.

We measure the accuracy of the estimator $\hat{c}$ through  the variance as below:
\begin{equation}
    \text{Acc}(\hat{c}) = \frac{1}{\text{Var}(\hat{c})} = \frac{\sum_{i=1}^{k}(d_i - \bar{d})^2}{\sigma^2}
\end{equation}

Considering the uniform distribution of these $k$ selected challengers, the nominator can be written as the product of $k$ and the constant variance of distance distribution, so we have
\begin{equation}\label{eq:acc}
    \text{Acc}(\hat{c}) =  \frac{k\text{Var}(d_i)}{\sigma^2}
\end{equation}
Equation~\ref{eq:acc} shows that with more data samples (larger $k$) and less noises (smaller $\sigma^2$), we get better accuracy of the estimated mapping.

Now consider the constraints that $\beta k$ data samples are Byzantine and removed by some robust filtering method, since Byzantine challengers are also distributed uniformly randomly in locations, it doesn't change the variance of distance to the given challenger, only the number of data samples decreases to $(1-\beta)k$. When Byzantine challengers try to circumvent the detection, honest challengers with small delay are more likely to be removed. As a result, the accuracy with robust protocol is
\begin{equation}
    \text{Acc}_\beta(\hat{c}) \simeq  \frac{(1-\beta)k\text{Var}(d_i)}{\sigma^2}
\end{equation}

The above model assumes the Gaussian noise (two-sided) which may not be the best fit of observation errors for RTT. Now we assume the error $\epsilon$ follows an exponential distribution with mean $1/\lambda$. Since exponential distribution is always positive, our estimated delay with additive noise shows the conservative curve for delay-to-distance mapping, which is consistent with our implemented protocol. Then we solve the maximum likelihood estimation (MLE) by setting the derivative of the log-likelihood function with respect to $\lambda$ and we get for the conservative estimator $\hat{c}$ as follows
\begin{equation*}
    \hat{c} = \min\left(\frac{t_i}{d_i}\right)
\end{equation*}

The distribution of $t_i/d_i$ combines the exponential noise term and the uniformly sampled distance term as denominator, hence we can not provide the direct accuracy. However, given the form of the estimator, the adversarial challengers can easily violate the accuracy by reporting smaller delay/distance ratio. In this case, our ratio-based filtering method effectively filter out these outliers.

\subsection{Delay Matrix Completion}
So far, both partially trusted and trustless PoIG protocols require each challenger to ping other servers or challengers independently, which yields a total of $O(Nk)$ measurements, where $N$ is the total number of challengers and $k$ is the number of selected nearby servers or challengers within $X$ kilometers. To improve the efficiency of calibration protocols, challengers can utilize the measurements conducted by other challengers to construct their own curves through matrix completion methods, the problem is formally described as follows.



Consider we have $m$ challengers, all challengers are within $X$ kms of others and we denote their $m\times m$ squared delay matrix as $M$, where
\begin{equation}
    M_{ij} = t_{ij}^2
\end{equation}
for $(i,j)\in E$, $t_{ij}$ is the RTT measured by challenger $i$ to $j$, $E$ is the index set of measurements. 
We assume that $M$ is a noisy observation of some linear transformation from the squared distance matrix among challengers whose rank is at most $4$~\cite{montanari2010positioning}, hence it has a low-rank structure, we denote the rank as $r$. Due to the existence of Byzantine challengers, a fraction $\beta$ columns (and its corresponding rows) are arbitrarily corrupted. The goal of matrix completion is to infer the delay measurements of non-corrupted challengers and the identities of the corrupted challengers.

Existing works have explored the robust matrix completion problem with noises and corrupted entries~\cite{chen2013low}, with a specific class of work focusing on column-wise corruption~\cite{chen2011robust,chen2015matrix}. In our model an adversarial challenger $i$ participates in all the active measurements it issues and receives, it can corrupt the entire row and column $M_{i*}$ and $M_{*i}$. We follow the paradigm in ~\cite{chen2011robust,chen2015matrix} and decompose $M$ into two parts
\begin{equation}
    M = L + C
\end{equation}
where $L$ is the  matrix with non-corrupted columns whose rank is $r$. At most $(1-\beta)m$ of the columns of $L$ are non-zero. $C$ is the matrix for corrupted columns and at most $\beta n$ columns are non-zero. The observed entries of $M$ are sampled by a Bernoulli model with uniform probability $p$. Note that for a given $C$, we will remove all the rows with index of non-zero columns in $C$ in $L, C, M$ to make sure that $L$ is fully uncorrupted.

The main idea to recover $L_0$ and identify the set of corrupted challengers is to solve a convex program to find an optimal pair $(L^*,C^*)$ minimizing the weighted sum of the nuclear norm of $L$ (the sum of singular values of $L$) and the matrix $\ell_{1,2}$ norm of $C$ (column $\ell_{2}$ norm). The conditions of successful completion require (1) enough observed entries ( $p\gtrsim\frac{r\log^{2}m}{m}$) and (2) limited number of corrupted columns ($\beta\lesssim\frac{p}{r\sqrt{r}\log^{3}m}$). We implement Algorithm 2 in~\cite{chen2015matrix} which applies the Augmented Lagrangian Multiplier (ALM) to the solve the optimization problem. We evaluate the matrix completion results on a $100\times 100$ RTT matrix ($m=100$) with different faction of $\beta \in [0.1, 0.2, 0.3, 0.4]$ columns and different sample probability $p \in [0.3, 0.4, 0.5, 0.6]$. Other parameters are chosen according to the conditions that ensures high completion successful probability.  Figure~\ref{fig:rmc} demonstrates that more sampling and less corruptions improve on detection accuracy. Figure~\ref{fig:detection} gives an example output of the detection results in comparison with the actual corruptions.

\begin{figure}[h]
    \centering
    \includegraphics[width=0.45\textwidth]{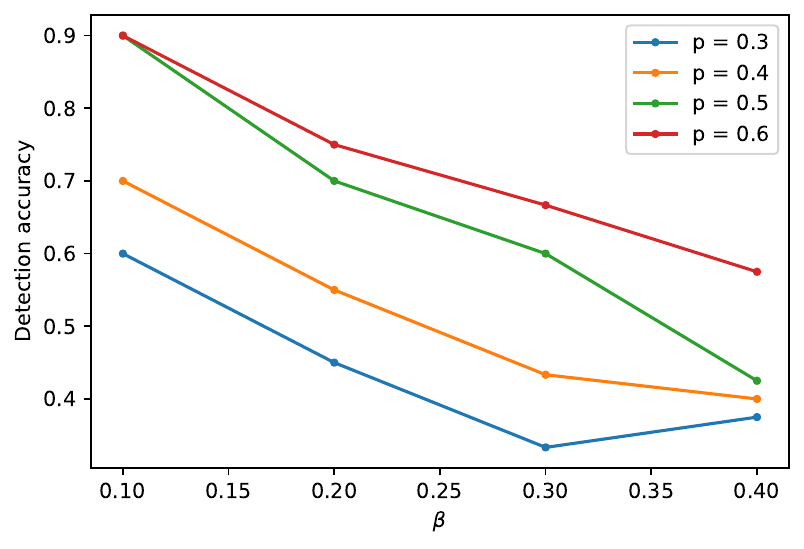}
    \caption{Detection accuracy of robust matrix completion under different $\beta$ and $p$.}
    \label{fig:rmc}
\end{figure}

\begin{figure}[h]
    \centering
    \includegraphics[width=0.45\textwidth]{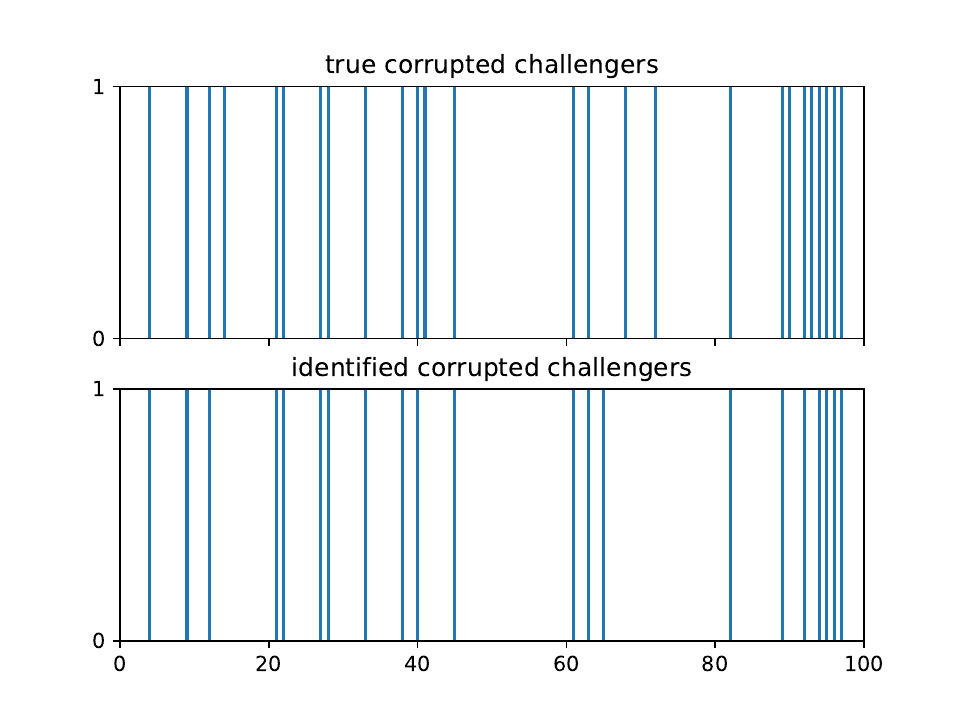}
    \caption{Detection results of robust matrix completion with $\beta=0.3$ and $p=0.6$.}
    \label{fig:detection}
\end{figure}

\section{Implementation and Evaluation}
\label{sec:evaluation}

In this section, we describe the implementation of our protocol and discuss the measurement results in detail. The source code of our system is available at~\cite{github-link}.

\subsection{Protocol Implementation}
\label{subsec:protocol_implementation}
We implement the challengers and Waldo components in the PoLoc protocol as a Dart~\cite{dart} program. We selected Dart as the programming language for its robust support for multiple platforms~\cite{dart-multiple}, allowing crowdsourced challengers to run the program on their preferred platforms.

\begin{figure}[h]
    \centering
    \includegraphics[width=0.4\textwidth]{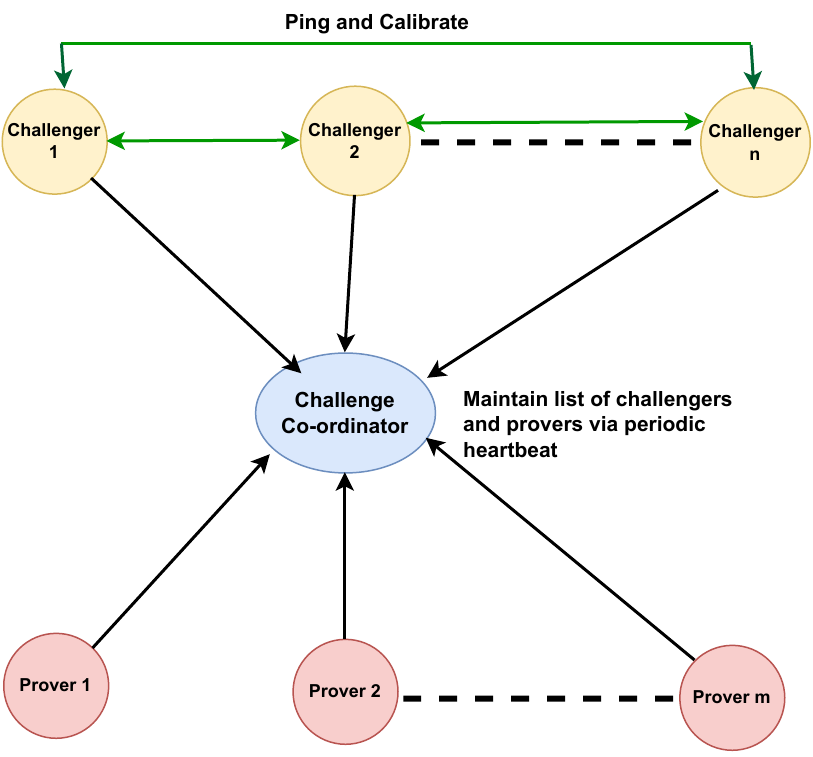}
    \caption{Implementation architecture with challenge coordinator.}
    \label{fig:implementation}
\end{figure}

Figure~\ref{fig:implementation} illustrates the implementation architecture. A microservice, named {\em challenge coordinator}, has been implemented in Python at the backend, who maintains a list of all active challengers. The program of challengers and Waldo continuously send heartbeats to the challenge coordinator, indicating their active status. The challengers periodically conduct the trusted PoIG protocol, measuring delays among themselves to construct a monotone delay-distance mapping, as mentioned in Section~\ref{sec:PoIG}. After mapping construction, each challenger commits this information to the challenge coordinator, who then uses the mappings to compute distances and output the final uncertainty. 

We deploy our BFT-PoLoc system on top of the Solana and Ethereum blockchains for challenge coordination and proof generation.
To initiate a challenge, we develop a user interface accessible via a web browser~\cite{pol-ui}. A payer with a blockchain account can initiate a challenge through the UI by paying the appropriate amount of tokens. Once initiated, the challenge coordinator informs the selected challengers and Waldo about the challenge. All challengers within $X=2000$ km of Waldo's claimed location are selected for the challenge.

The selected challengers initiate communication with Waldo. The challengers and Waldo authenticate each other using their respective public-private key pairs from the underlying blockchain the challenge was initiated on. After authentication, the challenger measures the delay to Waldo using 20 packets exchanged over UDP. We measure delays over UDP to avoid the overhead associated with TCP acknowledgments.

After measuring the delay, the challengers report it to the challenge coordinator. If a challenger's delay to Waldo exceeds the maximum delay measured during the PoIG phase, as outlined in Section~\ref{sec:PoIG}, that challenger is not considered for validating Waldo. The challenge coordinator has the delay-distance mapping for each challenger. Using this mapping, it can compute the final $R^{*}$ for Waldo following Algorithm~\ref{alg:PoLoc}.

\subsection{Evaluation Results}
\paragraph{Experimental setup.} We use 450 RIPE Atlas~\cite{ripe-atlas} nodes to simulate different Waldos. And we deploy our own set of challengers in public clouds~\cite{aws, oneprovider, vpsserver, stackpath}. Since RIPE Atlas nodes cannot run the custom program, we measure ICMP ping delays from challengers to RIPE Atlas Waldos instead. The challengers deployed in public clouds run the custom challenger program. We focus specifically on the task of determining whether a Waldo is in the US. Therefore, all challengers and Waldos are distributed in the US and neighboring countries. Figure~\ref{fig:challengers37} and ~\ref{fig:Waldos} show the distribution of these nodes, 34 challengers are deployed in the US, 2 are in Canada, and 1 is in Mexico. We investigate how the uncertainty $R^{*}$ varies when certain system parameters change, these parameters includes: (1) the number of challengers, (2) whether Waldo is Byzantine, (3) whether Waldos uses VPNs, and (4) the existence of Byzantine challengers. The results are discussed below.

\paragraph{Number of challengers.} To examine the effect of the number of challengers, we use $22$ and $37$ challengers to evaluate the uncertainty of Waldos. To test the performance of larger number of challengers, we further utilize $450$ RIPE Atlas nodes as challengers. We consider two different situations: (1) honest Waldos located in the US proving their locations, (2) Byzantine Waldos outside the US claiming a location within the US. 

As is shown in Figure~\ref{fig:cdf_uncertainty}, with 37 challengers in place, the system could verify the location of Waldos with significantly reduced uncertainty compared to using only 22 challengers. Specifically, when using 22 challengers about 20\% of the Waldos can be validated with uncertainty less than 100 kms, while using 37 challengers the fraction increases to about 25\%. 450 challengers can further improve the fraction to 45\%. Alternatively, the percentage of Waldos that achieves uncertainty less than 1000 km for 22, 37, 450 challengers are about 80\%, 85\% and 95\%, respectively.
This reduction in uncertainty directly correlates with the system's ability to authenticate the location of Waldos more precisely, showcasing the importance of a large and diverse set of challengers in enhancing the system's effectiveness.

These results demonstrate that as the number of challengers increases, the uncertainty in validation decreases. This is because an increase in the number of challengers leads to more challengers with different ISPs. The increase in ISP diversity makes the PoIG of the delay-distance mapping outlined in more robust, resulting in performance gains. Moreover, our results suggest that some Waldos, despite high challenger diversity, can still exhibit higher uncertainty if they have a poor internet connection to neighboring ISPs. However, the number of such Waldos is small, with only 5\% having an uncertainty of more than 1000 km with 450 challengers.

\begin{figure}
\centering
\begin{subfigure}[]{0.4\textwidth}
   \includegraphics[width=1\linewidth]{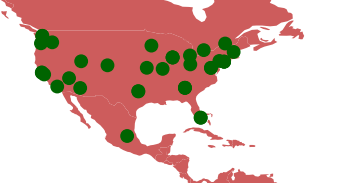}
   \caption[]{}
   \label{fig:challengers37} 
\end{subfigure}

\begin{subfigure}[]{0.4\textwidth}
   \includegraphics[width=1\linewidth]{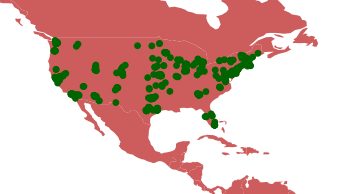}
   \caption[]{}
   \label{fig:Waldos}
\end{subfigure}

\caption[]{(a) 37 challengers deployed in different public clouds in US, Canada and Mexico. (b) 450 RIPE Atlas Waldos in US.}
\end{figure}

\begin{figure}[h]
    \centering
    \includegraphics[width=0.45\textwidth]{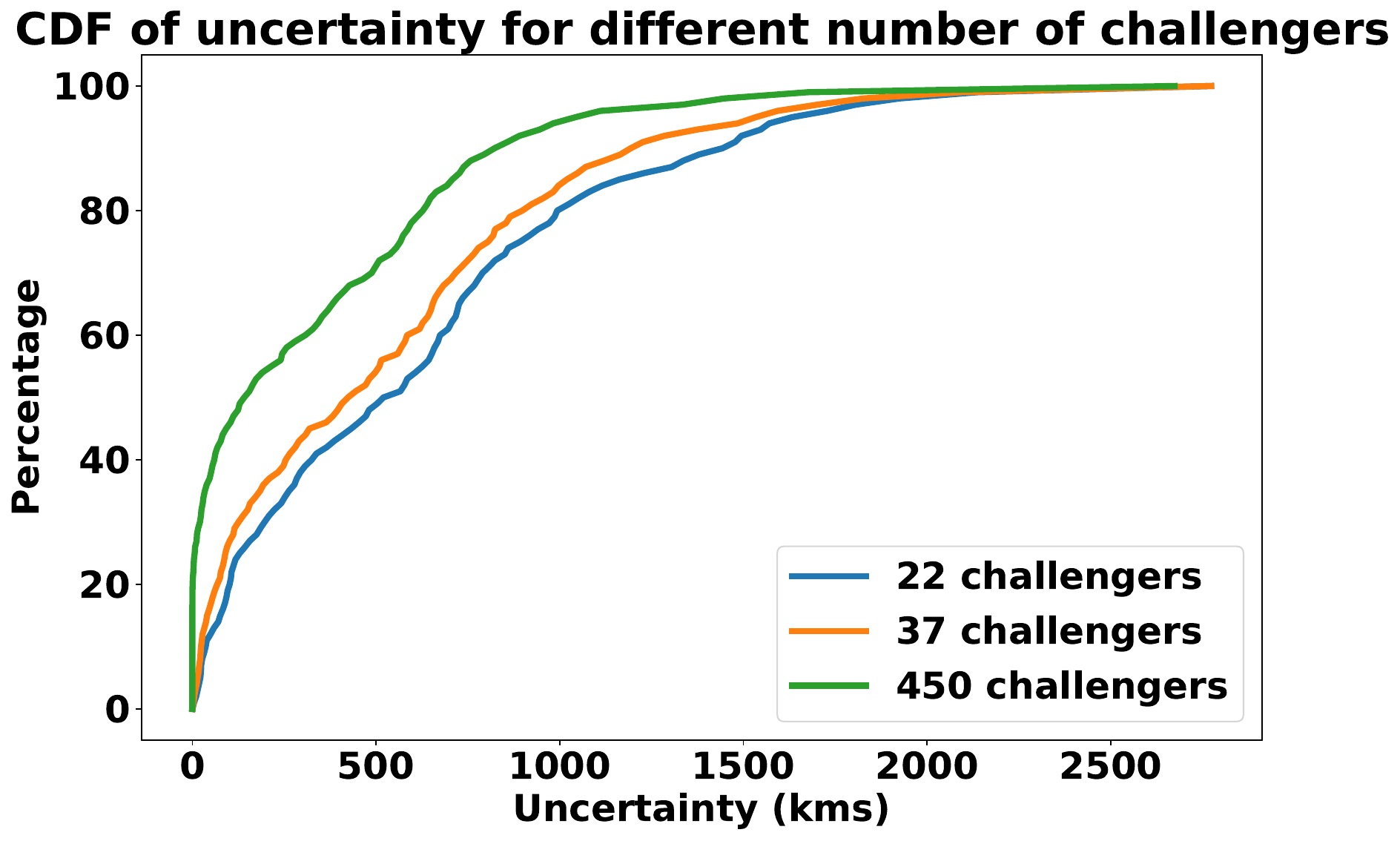}
    \caption{CDF of uncertainty for different number of challengers.}
    \label{fig:cdf_uncertainty}
\end{figure}

\paragraph{Byzantine Waldos.}  A critical aspect of our evaluation is assessing the system's robustness against Byzantine behaviors, including Waldos attempting to falsify their location and challengers providing misleading measurements. Our experiments with Byzantine Waldos, who claimed a location within the US but were actually positioned outside, highlighted the system's capacity to identify such discrepancies. We use an uncertainty threshold to effectively distinguish honest Waldos from Byzantine ones, with a trade-off between the accuracy of detection and the potential for false positives among honest Waldos.

\begin{figure}
\centering
\begin{subfigure}[]{0.4\textwidth}
   \includegraphics[width=1\linewidth]{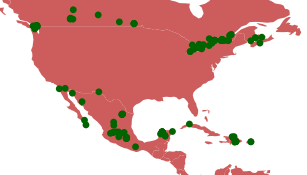}
   \caption[]{}
   \label{fig:byzantine_Waldos_actuallocation} 
\end{subfigure}

\begin{subfigure}[]{0.4\textwidth}
   \includegraphics[width=1\linewidth]{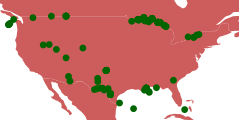}
   \caption[]{}
   \label{fig:byzantine_Waldos_claimedlocation}
\end{subfigure}
\caption[]{Byzantine Waldos ouside US (a) True location outside US. (b) Claimed location in US.}
\end{figure}

We chose 206 RIPE Atlas nodes serving as Waldos outside US in Mexico, Canada, Puerto Rico and Cuba. Fig.~\ref{fig:byzantine_Waldos_actuallocation} shows the true location of the byzantine Waldos, while Fig.~\ref{fig:byzantine_Waldos_claimedlocation} shows the claimed location of the Waldos in the US. We measure the uncertainty of their claimed location in the US with different number of challengers. 

\begin{figure}[h]
    \centering
    \includegraphics[width=0.4\textwidth]{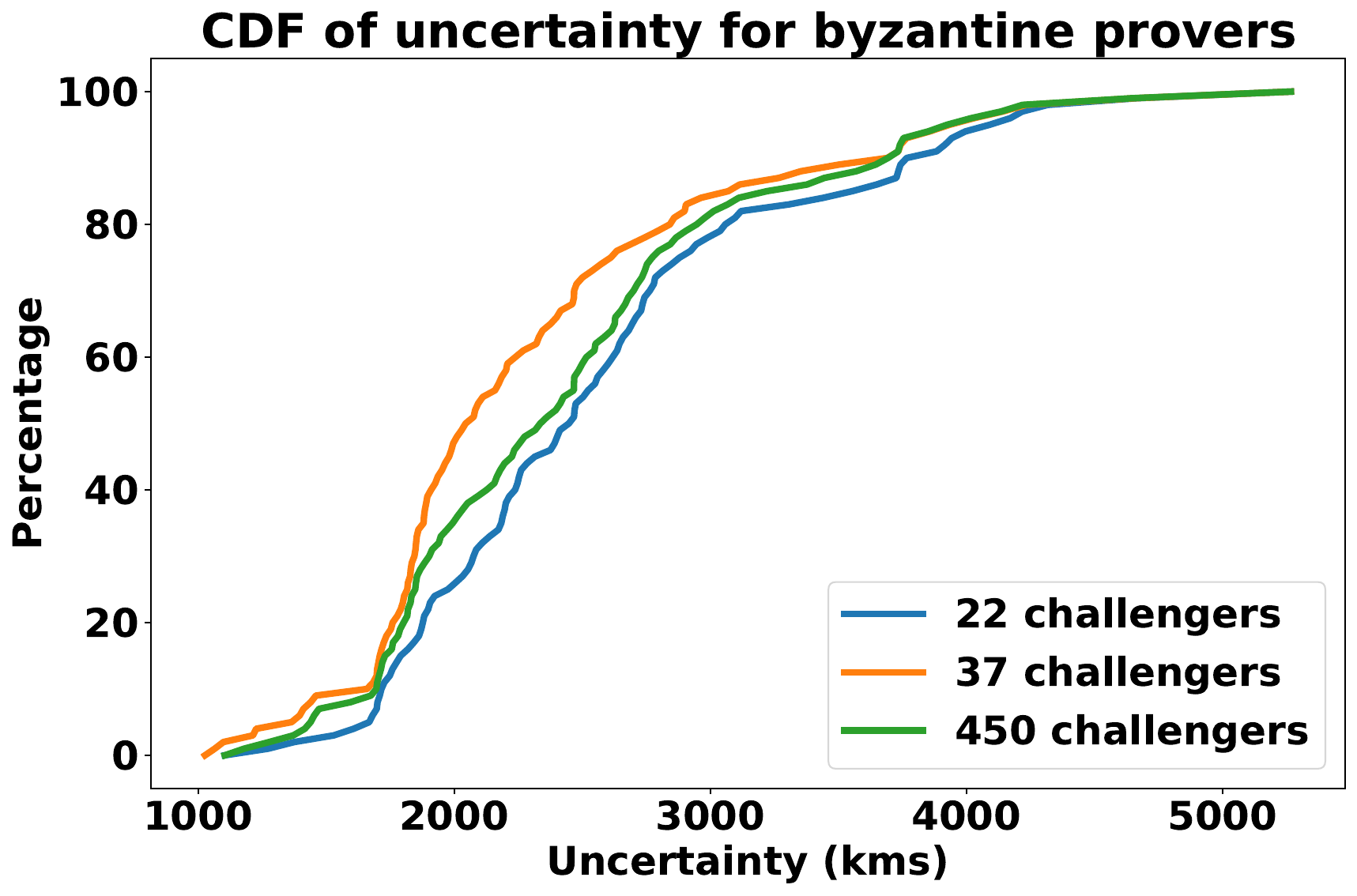}
    \caption{CDF of uncertainty for byzantine Waldos outside US.}
    \label{fig:cdf_uncertainty_byzantine}
\end{figure}

Figure~\ref{fig:cdf_uncertainty_byzantine} illustrates the CDF of uncertainty for the examined Byzantine Waldos. Notably, the level of uncertainty exhibits minor variations across three distinct sets of challengers. The results evaluated by 37 challengers display marginally lower uncertainty. In each scenario, the minimum uncertainty exceeded 1000 kilometers. Consequently, employing a 1000-kilometer threshold as the criterion for distinguishing Byzantine Waldos enables the accurate identification of such false location claims. Nonetheless, adopting this threshold comes with a caveat; as depicted in Figure~\ref{fig:cdf_uncertainty}, applying it results in a misclassification rate where 5\% of honest Waldos within the US, are inaccurately deemed to be outside when utilizing 450 challengers. Similarly, this misclassification rate escalates to 15\% with 37 challengers and further to 20\% with 22 challengers. We can reduce the misclassification rate by increasing the uncertainty threshold, but not all Byzantine Waldos can be detected. For example, with a threshold of 1500 km, we can identify 90\% of byzantine Waldos, and reduce the rate of misclassfication of valid Waldos to 2\%, with number of challengers as 450.  It demonstrates the trade-off in balancing false positives against the accurate detection of Byzantine behaviors.

\paragraph{The use of VPNs.}
We conducted a series of experiments to assess the impact of using a VPN server located in the US on the uncertainty associated with Byzantine Waldos situated outside the US. The protocol measures the end-to-end delay between the challenger and Waldo, ensuring authenticity through public-private key validation. This measurement approach implies that employing a VPN server would introduce additional delay beyond the natural latency between the challenger and Waldo.

\begin{figure}[h]
    \centering
    \includegraphics[width=0.5\textwidth]{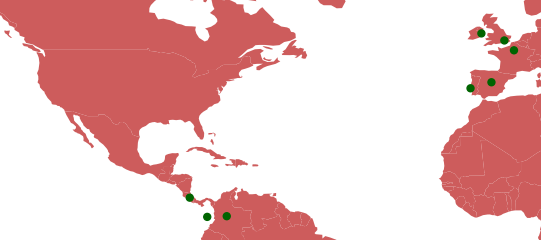}
    \caption{Byzantine Waldos using VPN in USA.}
    \label{fig:vpn_Waldos}
\end{figure}

To validate this, we carried out experiments with 8 Waldos, with 5 of the Waldos in Europe and 3 in South America as shown in Fig.~\ref{fig:vpn_Waldos}. The European Waldos were using a VPN in Virginia, in the east coast of US which is closer to them, while the South American Waldos were using a VPN in Texas, which is closer to them.

To investigate this effect, we analyzed the performance of 8 Waldos: 5 in Europe and 3 in South America, utilizing VPN servers in Virginia and Texas, respectively, due to their geographical proximity. Figure~\ref{fig:vpn_Waldos} depicts the CDF of uncertainty for these Waldos, contrasting scenarios with and without VPN usage. Notably, two South American Waldos were disqualified as their delays exceeded the benchmarks established during the PoIG phase, automatically categorizing them as Byzantine under our protocol. The analysis revealed that Waldos not using VPNs exhibited a minimum uncertainty of approximately 5000 km, whereas VPN usage inflated this uncertainty to over 9000 km in the worst-case scenarios.

These findings highlight a significant disparity compared to the 1000 or 1500 km uncertainty thresholds used for Byzantine Waldo identification. Consequently, our protocol demonstrates robustness against attempts to falsify location via VPN, as these strategies markedly increase the uncertainty, facilitating the identification of Byzantine Waldos located outside the US continent. In essence, Waldos attempting to manipulate their location through VPN inadvertently compromise their deception by exacerbating the detectable delay discrepancies, thereby reinforcing the protocol's efficacy in distinguishing malicious behaviors.

\begin{figure}[h]
    \centering
    \includegraphics[width=0.38\textwidth]{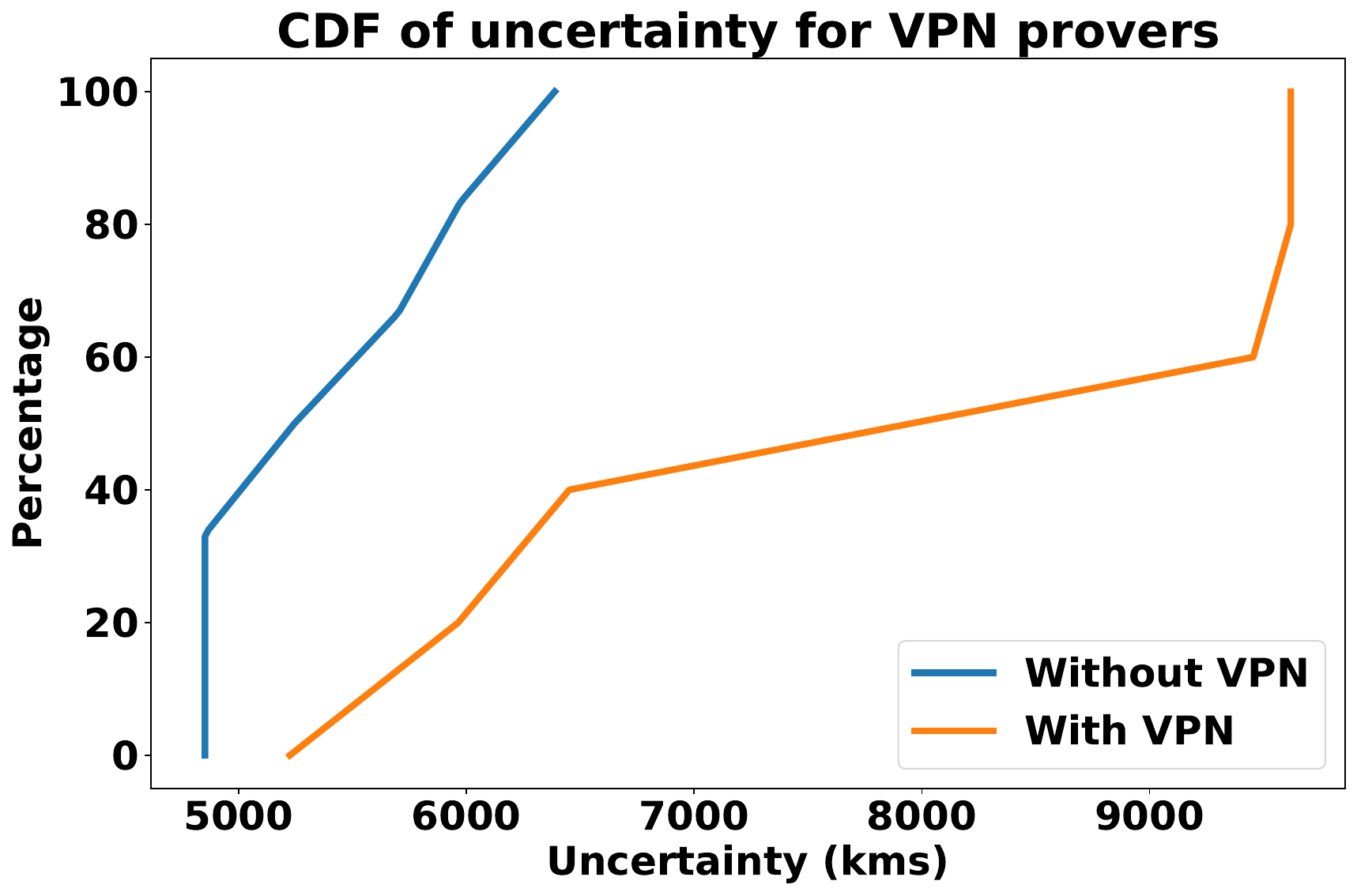}
    \caption{CDF of uncertainty for byzantine Waldos with and without VPN.}
    \label{fig:cdf_vpn_Waldos}
\end{figure}

\paragraph{Byzantine challengers.}
\label{subsubsec:byzantine_challengers}

Our evaluation so far assumes all challengers are trustworthy. However, the introduction of Byzantine challengers, who may either inflate or deflate the delay to manipulate Waldo's perceived location, necessitates an evaluation of our protocol's resilience to such adversarial behavior.

A method to mitigate the impact of Byzantine challengers involves disregarding the most extreme uncertainty measurements, effectively eliminating data points that could skew the results due to malicious intent. To quantify the protocol's tolerance for Byzantine behavior, we analyzed the effect of excluding the top $q$ percentile of uncertainty measurements, exploring the system's performance across varying levels of adversary presence.

Figure~\ref{fig:cdf_byzantinechallengers} illustrates the uncertainty distribution for a scenario with 450 challengers, including cases with no Byzantine challengers and with 2\%, 5\%, and 10\% Byzantine participation. By excluding the highest 2\% of uncertainty values for the scenario with 2\% Byzantine challengers, and similarly for 5\% and 10\%, we observed a notable differentiation in the system's ability to accurately identify Waldo's location. In the absence of Byzantine challengers, 95\% of Waldos were correctly validated within a 1000 km uncertainty threshold. This accuracy decreases as the percentage of Byzantine challengers increases, highlighting the need for a substantial pool of challengers to dilute the effect of malicious actors effectively.

\begin{figure}[h]
    \centering
    \includegraphics[width=0.4\textwidth]{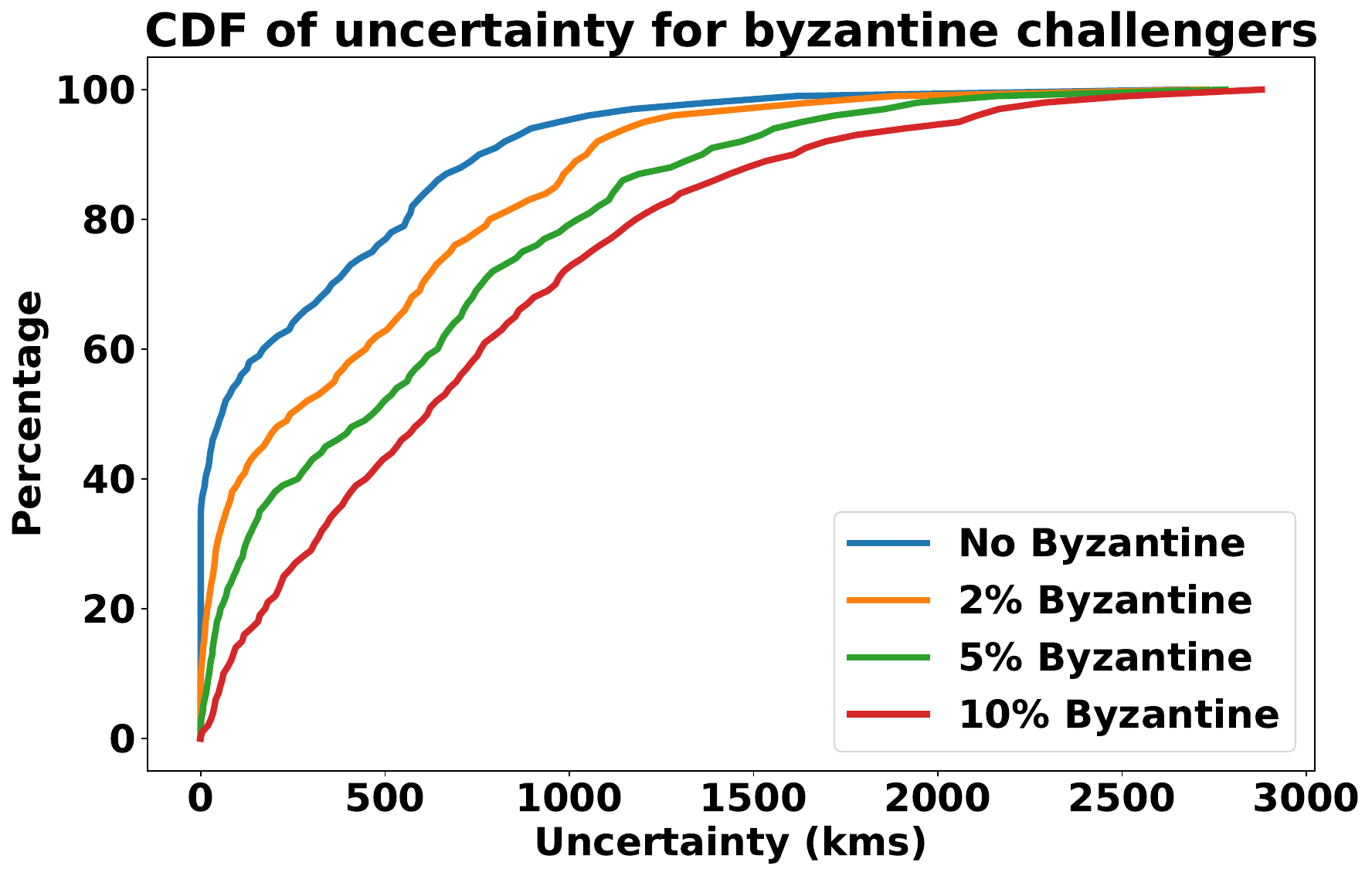}
    \caption{CDF of uncertainty with byzantine challengers.}
    \label{fig:cdf_byzantinechallengers}
\end{figure}
Furthermore, we explored the potential for enhanced tolerance to Byzantine challengers by adjusting the protocol's angular coverage requirements. Specifically, we examined scenarios where uncertainties within a 60-degree angular region around the maximum uncertainty angle were disregarded, effectively focusing on a 300-degree coverage area.

The modified approach, as depicted in Figure~\ref{fig:cdf_byzantinechallengers2}, revealed an improved ability to maintain accuracy despite Byzantine interference. For instance, excluding a 60-degree angular region increased the percentage of Waldos correctly identified within the 1000 km threshold by approximately 2\% to 6\%, depending on the percentage of Byzantine challengers present. This adjustment demonstrates that by relaxing the coverage requirements slightly, the system can afford a higher tolerance for Byzantine behavior, potentially accommodating up to 10\% Byzantine challengers with an acceptable level of accuracy.

In summary, our findings suggest that while the presence of Byzantine challengers poses a significant challenge, strategic adjustments to the protocol's evaluation criteria can enhance resilience, enabling it to withstand a higher proportion of adversarial participation without compromising the integrity of location verification.

\begin{figure}[h]
    \centering
    \includegraphics[width=0.4\textwidth]{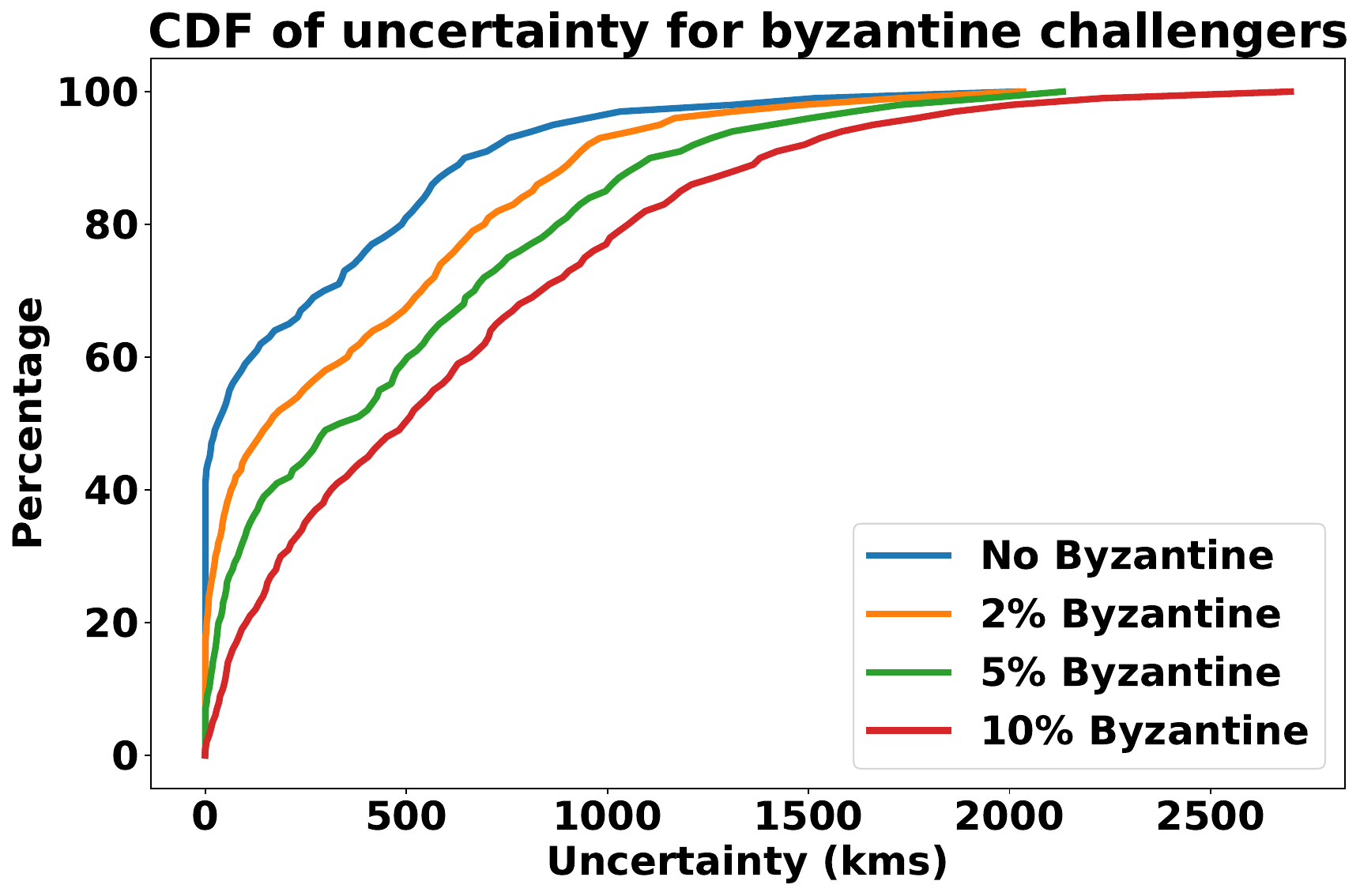}
    \caption{CDF of uncertainty with byzantine challengers with 60 degree angular region uncovered.}
    \label{fig:cdf_byzantinechallengers2}
\end{figure}

\subsection{Comparison with CBG} 
\label{subsubsec:cbg}
In this analysis, we explore whether a monotone mapping approach enhances performance compared to a linear mapping method, exemplified by the CBG model~\cite{gueye2004constraint}, which employs a best-fit line to correlate delay with distance. In the CBG model, a given delay is translated into the maximum possible distance between the Waldo and a challenger through linear regression.

\begin{figure}[h]
    \centering
    \includegraphics[width=0.4\textwidth]{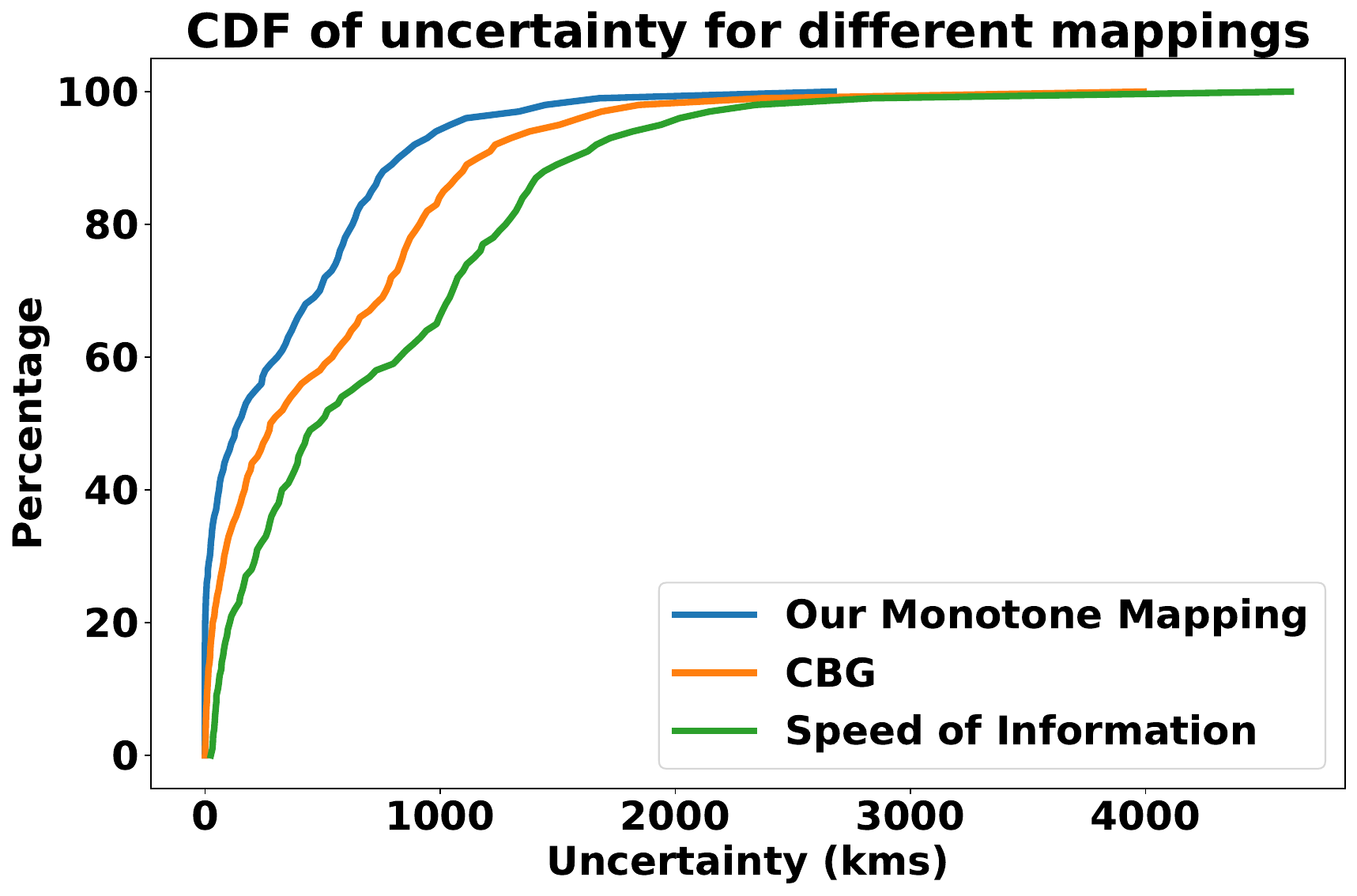}
    \caption{CDF of uncertainty for different delay to distance mappings.}
    \label{fig:cdf_mappings}
\end{figure}

Figure~\ref{fig:cdf_mappings} compares the uncertainty distributions for our monotone mapping strategy, the CBG approach, and the theoretical uncertainty based on the speed of information transmission in fiber optic cables, represented as $\frac{2}{3}c$ or approximately 200 km/s, where $c$ is the speed of light. The analysis is conducted with a scenario involving 450 challengers.

The data reveal that, at the 95th percentile, the uncertainty threshold of 1000 km—which we adopt for identifying Byzantine Waldos—is met by our monotone mapping approach, whereas the CBG method reaches this threshold at approximately the 85th percentile, and the theoretical model based on the speed of information in fiber achieves it at merely the 67th percentile. This disparity indicates that the CBG model, with its linear mapping, tends to overestimate distances, thereby increasing the uncertainty in comparison to our monotone mapping method.

 The comparative analysis highlights the potential for monotone mapping to refine the accuracy of delay-to-distance functions, reducing the margin of error and bolstering the system's overall performance in identifying location spoofing attempts.

\section{Conclusion}
\label{sec:conclusion}
We develop a proof system to validate the geographical locations of Internet participants in a trustless environment, without relying on any trusted service. The system comprises two parts: a Proof of Internet Geometry protocol that robustly converts delay measurements into distance estimates, and a Proof of Location protocol which leverages signed ping delays and a Byzantine-fortified trigonometry framework to limit the uncertainty in geolocation. Essentially, our study utilizes internet delay measurements combined with cryptographic techniques and Byzantine-resistant data analysis within geometric constraints to securely verify locations.

We implement a fully functional location verification system and integrate it into major blockchains. Through comprehensive testing on distributed nodes across the US, we evaluate the system's accuracy and robustness and examine how various system parameters affect them. Our findings reveal that the protocol effectively identifies Byzantine Waldos claiming a U.S. location while actually being outside the country, with an uncertainty margin of 1000 km. The results further indicate that Byzantine Waldos attempting to masquerade as being in the U.S. by using VPNs from another continent are at a disadvantage. The protocol withstands up to 2\% Byzantine challengers while maintaining a 12\% error rate in the classification of Byzantine Waldos. Additionally, we demonstrate that our approach of monotonically mapping delay to distance yields considerable improvements over traditional linear mapping methods.

\section*{Acknowledgment}
This work was conducted when all authors were working at Witness Chain (\href{https://witnesschain.com/}{https://witnesschain.com/}). We thank Sagnik Bhattacharya, Mustafa Doger and Advaith Suresh for their help with the discussions and implementation.

\bibliographystyle{IEEEtranS}
\bibliography{pol}

\end{document}